\documentclass{custom} 


\title[Gaussian discrepancy]{Gaussian discrepancy: a probabilistic relaxation of vector balancing}
\usepackage{bbm}
\usepackage{enumitem}
\usepackage{graphicx}
\usepackage{mathtools}
\usepackage{microtype}
\usepackage{tikz}
\usepackage{times}
\usepackage[Symbolsmallscale]{upgreek}
\usepackage{thm-restate} 
\usepackage{style}
\usepackage{mdframed} 
\usepackage{amsmath}
\usepackage{euscript}

\makeatletter
\def\set@curr@file#1{\def\@curr@file{#1}} 
\makeatother
\usepackage[load-configurations=version-1]{siunitx} 

\usepackage{float}



\coltauthor{\Name{Sinho Chewi} \Email{schewi@mit.edu}\\
 \Name{Patrik Gerber} \Email{prgerber@mit.edu} \\
 \Name{Philippe Rigollet} \Email{rigollet@mit.edu}\\
 \addr MIT 
 \AND
 \Name{Paxton Turner} \Email{paxtonturner@fas.harvard.edu} \\
 \addr Harvard University
}


\begin{document}

\maketitle

\begin{abstract}

We introduce a novel relaxation of combinatorial discrepancy called \emph{Gaussian discrepancy}, whereby binary signings are replaced with correlated standard Gaussian random variables. This relaxation effectively reformulates an optimization problem over the Boolean hypercube into one over the space of correlation matrices. We show that Gaussian discrepancy is a tighter relaxation than the previously studied vector and spherical discrepancy problems, and we construct a fast online algorithm that achieves a version of the Banaszczyk bound for Gaussian discrepancy. This work also raises new questions such as the Koml\'{o}s conjecture for Gaussian discrepancy, which may shed light on classical discrepancy problems.

\end{abstract}

\begin{keywords}
combinatorial optimization, Gaussian discrepancy, online discrepancy, spherical discrepancy, vector discrepancy
\end{keywords}



\section{Introduction and overview of the paper}

In this work we introduce a probabilistic relaxation of the classical combinatorial discrepancy problem that we call \emph{Gaussian discrepancy}. In this section, we first briefly survey discrepancy theory and formally define our relaxation. Then we discuss our main results which consist of (i) sharp comparisons of Gaussian discrepancy to previously studied relaxations, and (ii) a fast algorithm for online Gaussian discrepancy. We conclude this overview with some open problems and an outline of the remainder of the paper.  

\subsection{Discrepancy theory and Gaussian discrepancy}

\subsubsection{Background on combinatorial discrepancy}

Discrepancy theory is a rich area of mathematics which has both inspired the development of novel tools for its study, and found numerous applications in a variety of fields such as combinatorics, computer science, geometry, optimization, and statistics; see the textbooks~\citet{matousek, chazelle}.
In one of its most fundamental forms, discrepancy asks the following question: given an $m\times n$ matrix $A \in \R^{m\times n}$, determine the value of the \emph{discrepancy} of $A$, defined as
\begin{align}
\label{eq:disc}
    \disc(A)
    &\deq \min_{\sigma \in {\{\pm1\}}^n}{\norm{A\sigma}_\infty}\, = \min_{\sigma \in {\{\pm1\}}^n}{\max_{1 \leq i \leq m}{\abs{(A\sigma)_i}}}\,.
\end{align}
This question can be interpreted in terms of \emph{vector balancing}: if $v_1,\dotsc,v_n$ denote the columns of $A$, then we are looking for a \emph{signing}, that is, a vector of signs $\sigma \in {\{\pm1\}}^n$, that makes the signed sum $\sum_{j=1}^n \sigma_j v_j$ have small entries. 

A seminal result in this area, due to~\citet{spencer} and independently~\citet{gluskin1989extremal} (see also~\citet{giannopoulous1997vectorbalancing}), states that \begin{equation}
\label{eq:spencer}
    \disc(A) \le 6\sqrt n
\end{equation}
when $m=n$ and the entries of $A$ are bounded in magnitude by $1$. This remarkable result is the best possible, up to the constant factor $6$, and should be compared with the discrepancy incurred by a signing chosen uniformly at random which is of order $\Theta(\sqrt{n\log n})$. A far-reaching extension of this result is the \emph{Koml\'os conjecture} \citep{spencer}. 

\begin{conjecture}[Koml\'{o}s conjecture]
\label{conj:komlos}
There exists a constant $K > 0$ such that for any matrix $A$ whose columns have Euclidean norm at most $1$, it holds that
\[
\disc(A) \leq K\,. 
\]
\end{conjecture}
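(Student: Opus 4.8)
This is the celebrated Koml\'os conjecture, which remains open; accordingly, what follows is a sketch of the most natural lines of attack rather than a complete argument, together with an indication of why each stalls at the same place.

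The classical approach is \emph{partial coloring}. Rather than producing a full signing $\sigma \in \{\pm 1\}^n$ in one shot, one repeatedly finds a point $x \in [-1,1]^n$ a constant fraction of whose coordinates are already $\pm 1$ and for which $\norm{Ax}_\infty$ is small, freezes those coordinates, and recurses on the remaining submatrix; after $O(\log n)$ rounds every coordinate is rounded and one adds up the per-round discrepancies. Feeding in the available partial-coloring lemmas --- the pigeonhole/entropy argument and its algorithmic descendants due to Lovett--Meka and Rothvoss --- a column of Euclidean norm at most $1$ contributes a constant per round, so the telescoped sum is $O(\log n)$; a finer accounting that tracks the shrinking constraint set recovers the current record $\disc(A) = O(\sqrt{\log n})$ of Banaszczyk. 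To reach $O(1)$ one would need the per-round discrepancy to decay geometrically down the recursion, which none of the known partial-coloring arguments provides.

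A second, convex-geometric route is Banaszczyk's: if $K \subseteq \R^m$ is symmetric and convex with Gaussian measure at least $1/2$, then for any $v_1,\dots,v_n$ of Euclidean norm at most $1$ there is a signing with $\sum_{j=1}^n \sigma_j v_j \in cK$ for an absolute constant $c$; taking $K$ a scaled cube $t\,[-1,1]^m$ with $t$ of order $\sqrt{\log m}$ to ensure the measure bound then yields the $\sqrt{\log m}$ estimate. Proving Koml\'os this way would demand a symmetric convex body with Gaussian measure bounded below yet constant width in every one of the $m$ coordinate directions, which cannot exist once $m$ is large; one would therefore have to replace this generic statement by one exploiting the specific structure of the cube --- that only the particular signed sum, rather than a generic Gaussian vector, needs to land inside it. A third route, closest in spirit to the present paper, is algorithmic: relax the signing to a correlation matrix (equivalently, to correlated standard Gaussians $g_1,\dots,g_n$), optimize a surrogate for the expected discrepancy of $\sum_j g_j v_j$ over correlation matrices, and round to $\pm 1$ via a structured random walk. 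The Gram--Schmidt walk of Bansal, Dadush, Garg and Lovett attains the Banaszczyk bound, even in an online model, by running exactly such a covariance process with increments supported on the columns, and Gaussian discrepancy gives a clean probabilistic reinterpretation of this; pushing it to $O(1)$ would require controlling the covariance along the walk much more tightly using only $\norm{v_j}_2 \le 1$.

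The obstacle is the same in all three approaches, and is arguably the real content of the conjecture: every known method bounds $\norm{Ax}_\infty = \max_{1 \le i \le m} \abs{(Ax)_i}$ by controlling the $m$ linear functionals $(Ax)_i$ essentially one at a time --- via a union bound, a Gaussian-measure estimate for a product body, or an entropy count --- and each such step loses a factor polynomial in $\log m$. Koml\'os asserts that this loss is an artifact of the analysis rather than of the problem, and removing it appears to require a genuinely global argument that treats all $m$ constraints simultaneously; I do not expect a short proof. A more modest target, and one of the motivations for this paper, is the analogous statement for Gaussian discrepancy, where passing to the convex domain of correlation matrices may make such a global argument more tractable.
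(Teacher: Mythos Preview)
Your proposal is appropriate: the statement is the Koml\'os conjecture, which the paper presents as an open problem and does not attempt to prove; you correctly recognize this and give a survey of the standard approaches (partial coloring, Banaszczyk's convex-geometric argument, and covariance-controlled random walks) together with the common $\log m$ obstruction. There is nothing to compare against, since the paper offers no proof either --- indeed it explicitly flags the Gaussian-discrepancy analogue of Koml\'os as an open question in Section~\ref{scn:open_problems}.
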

This conjecture remains one of the most important open problems in the field, and the best known bound for the Koml\'os problem, due to~\cite{banaszczyk}, yields $\disc(A) = O(\sqrt{\log (m \land n)})$.\footnote{Here $m \land n = \min\{m,n\}$.} The Koml\'{o}s conjecture contains as a special case the long-standing Beck--Fiala conjecture which states that if $A$ has $t$-sparse columns in $\{0, 1\}^m$, then $\disc(A) = O(\sqrt{t})$ \citep{BecFia81}. 

The original proofs of these results are non-constructive in the sense that they do not readily yield efficient algorithms for computing signings which achieve these discrepancy upper bounds. In the last decade, considerable effort was devoted to matching these upper bounds \textit{algorithmically}. Starting with the breakthrough work of~\citet{Ban10}, there are now a number of algorithmic results matching Spencer's bound ~\citep{LovMek12, HarSchSin14,LevRamRot17, Rot17, EldSin18}. The task of making Banaszczyk's bound algorithmic was more challenging, and it was settled in the last few years in a line of works~\citep{BanDadGar16, LevRamRot17, BanDadGarLov18, DadGarLov19}.

Recently, \emph{online} discrepancy minimization~\citep{ BanSpe20,bansal2020online,alweiss2020discrepancy, BanJiaMek21, LiuSahSawhney21} has seen increasing interest and has led to a new perspective on Banaszczyk's result.
In the \emph{oblivious} online setting, an adversary picks in advance vectors $v_1,\dotsc,v_T \in \R^m$, each with Euclidean norm at most $1$. During each round $t=1,\dotsc,T$, the algorithm receives the vector $v_t$, and it must output a sign $\sigma_t \in \{\pm1\}$. The goal of the algorithm is to minimize the maximum discrepancy incurred at any time, \textit{i.e.}\ the quantity $\max_{t \in [T]}{\norm{\sum_{s=1}^t \sigma_s v_s}_\infty}$. \citet{alweiss2020discrepancy} conjecture the following online version of the Banaszczyk bound.

\begin{conjecture}[Online Banaszczyk]
\label{conj:online_banaszczyk}
There exists a randomized algorithm for online balancing in the oblivious adversarial setting that with high probability achieves the bound
\[
\max_{t \in [T]}{\Bigl\lVert \sum_{s=1}^t \sigma_s v_s\Bigr\rVert_\infty} \lesssim \sqrt{ \log(mT)}\,,
\]
for any sequence $v_1, \ldots, v_T \in \R^m$ of vectors with Euclidean norm at most $1$.
\end{conjecture}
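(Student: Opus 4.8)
The natural plan is the \emph{self-balancing walk} of \citet{alweiss2020discrepancy}, which crucially exploits the obliviousness in the statement (the $v_t$ are fixed before the algorithm's randomness is drawn). Maintain the running signed sum $w_t = \sum_{s=1}^t \sigma_s v_s \in \R^m$ with $w_0 = 0$, and on receiving $v_t$ set $\sigma_t = +1$ with probability $\frac12\bigl(1 - \langle w_{t-1}, v_t\rangle / c\bigr)$ and $\sigma_t = -1$ otherwise, where $c$ is a suitably large absolute constant (clip this expression to $[0,1]$ on the rare event that it leaves the interval; an induction folded into the analysis shows this barely affects the bound). The point of the bias is the identity
\[
\norm{w_t}_2^2 = \norm{w_{t-1}}_2^2 + 2\sigma_t \langle w_{t-1}, v_t\rangle + \norm{v_t}_2^2 ,
\]
together with $\mathbb E[\sigma_t \mid \mathcal F_{t-1}] = -\langle w_{t-1}, v_t\rangle/c$, which gives $\mathbb E[\norm{w_t}_2^2 \mid \mathcal F_{t-1}] = \norm{w_{t-1}}_2^2 - \tfrac2c\langle w_{t-1}, v_t\rangle^2 + \norm{v_t}_2^2$: a restoring drift toward the origin that beats the growth $\norm{v_t}_2^2 \le 1$ once $\langle w_{t-1},v_t\rangle$ is large in a suitable averaged sense, keeping the walk concentrated near the origin. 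This part of the argument is delicate but is carried out in \citet{alweiss2020discrepancy}.

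To convert this control of the walk into the $\ell_\infty$ bound demanded by the conjecture, track the hyperbolic-cosine potential $\Phi_t = \sum_{i=1}^m \cosh\bigl(\lambda (w_t)_i\bigr)$ for a small absolute constant $\lambda$. Each coordinate increment $\sigma_t (v_t)_i$ is bounded by $1$ and, conditionally on $\mathcal F_{t-1}$, carries a drift $-\langle w_{t-1}, v_t\rangle (v_t)_i / c$; a second-order expansion of $\cosh$, the bound $\sum_i (v_t)_i^2 \le 1$, and the restoring effect above show that $\mathbb E[\Phi_t]$ grows at most polynomially in $m$ and $T$. A Bernstein-type martingale tail bound (Freedman's inequality) applied to the fluctuations of $\Phi_t$, together with a maximal/union-bound argument over the $m$ coordinates and the $T$ prefixes $t \in [T]$, then yields $\max_{t \in [T]}\norm{w_t}_\infty \lesssim \lambda^{-1}\log(mT) \asymp \log(mT)$ with high probability. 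This is the guarantee of \citet{alweiss2020discrepancy}, and it is a multiplicative $\sqrt{\log(mT)}$ factor worse than Conjecture~\ref{conj:online_banaszczyk} asks for.

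Removing that last $\sqrt{\log(mT)}$ is the crux, and I expect it to be the genuine obstacle. The exponential potential only extracts \emph{subexponential} tail control of individual coordinates, costing a full $\log(mT)$, whereas the conjectured $\sqrt{\log(mT)}$ is precisely the \emph{subgaussian} rate one would get from a potential such as $\sum_i \exp(\lambda (w_t)_i^2)$; making such a potential a supermartingale would require the drift on coordinate $i$ to scale with $(w_{t-1})_i$ itself, but the self-balancing correction scales only with $(v_t)_i\langle w_{t-1}, v_t\rangle$, which an oblivious adversary can arrange to leave any single coordinate essentially uncorrected. The offline Banaszczyk bound circumvents exactly this difficulty through its convex-geometry machinery — controlling the \emph{position} of the partial sum relative to a cleverly chosen symmetric convex body rather than only its norm — and a proof of the conjecture would seemingly need an online counterpart of that machinery, or else a fundamentally sharper martingale/chaining inequality. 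No such tool is currently known, which is why the conjecture remains open.

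A concrete intermediate target — and, as the abstract indicates, what the present paper actually accomplishes — is to establish the \emph{Gaussian-discrepancy} relaxation of the statement: replace the randomized $\pm 1$ sign by a correlated standard Gaussian increment, for which the Taylor expansion of the potential closes exactly and the self-balancing drift may be taken precisely linear in $\langle w_{t-1}, v_t\rangle$, yielding a clean online analogue of Banaszczyk's $\sqrt{\log(mT)}$ bound. Whether such a Gaussian online bound can be derandomized back to a genuine $\{\pm 1\}$ signing without reintroducing the lost logarithmic factor is exactly the open question; short of that, the plan above describes the current state of the art for Conjecture~\ref{conj:online_banaszczyk} rather than a complete proof.
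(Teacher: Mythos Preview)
You are right that Conjecture~\ref{conj:online_banaszczyk} is genuinely open, and the paper does not prove it: it is listed explicitly as open problem~4 in Section~\ref{scn:open_problems}. So there is no ``paper's own proof'' to compare against, and your proposal correctly ends by acknowledging that what you have sketched is the state of the art (the $O(\log(mT))$ bound of \citet{alweiss2020discrepancy}) rather than a proof of the conjecture.

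One point worth correcting is your guess in the final paragraph about how the paper handles the \emph{Gaussian-discrepancy} relaxation. You anticipate a self-balancing walk with Gaussian increments analyzed via a potential whose Taylor expansion ``closes exactly''. That is not what the paper does. Instead, the paper extends the \emph{Gaussian fixed-point walk} of \citet{LiuSahSawhney21}: it constructs, for each rank $r\ge 2$, a Markov chain on $\R^r$ whose increments lie on the unit sphere $S^{r-1}$ and whose stationary distribution is exactly $\normal(0,\sigma^2 I_r)$ for $\sigma = 1/(2\sqrt{r-1})$. Running this chain coordinatewise forces each partial sum matrix $W_t$ to be \emph{exactly} Gaussian at every step, so the analysis reduces to a union bound over $mT$ Gaussian norms --- no potential, no martingale concentration. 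The parity obstruction that blocks $r=1$ (and hence blocks the conjecture itself) is precisely what prevents this construction from yielding genuine $\pm 1$ signs; the paper's contribution is to show that the obstruction disappears already at $r=2$, so that one is ``one rank away'' from Conjecture~\ref{conj:online_banaszczyk}. Your speculated mechanism (linear drift making a quadratic-exponential potential a supermartingale) is a plausible alternative route, but it is not the one taken here.
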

This would be nearly optimal, as a lower bound of $\widetilde \Omega(\sqrt{\log T})$ is established in~\citet{bansal2020online}. 

\subsubsection{Gaussian discrepancy and the coupling perspective}

Motivated by the aforementioned longstanding conjectures, recent algorithmic progress, and the goal of shedding new light on the classical discrepancy minimization problem, in this work we introduce a novel relaxation called \emph{Gaussian discrepancy}. Our route to Gaussian discrepancy is through an alternative perspective on the discrepancy objective \eqref{eq:disc} based on couplings of random variables.

Recall that a Rademacher random variable is distributed uniformly on $\{\pm1\}$. A \emph{coupling} of Rademacher random variables is a random vector for which each marginal distribution is Rademacher. Since a signing $\sigma \in \{\pm1\}^n$ and its negative $-\sigma $ achieve the same discrepancy, the uniform distribution on the set of optimal signings furnishes an \textit{optimal} Rademacher coupling that minimizes the right-hand-side of \eqref{eq:disc} in expectation. Precisely, it holds that
\begin{align}\label{eq:disc_as_coupling}
    \disc(A)
    = \min\Bigl\{\E\norm{A\sigma}_\infty : \Pr(\sigma_j = -1) = \Pr(\sigma_j = +1) = \frac{1}{2}~\text{for all}~j \in [n]\Bigr\}\, , 
\end{align}
where the minimization above is over couplings of Rademacher random variables.\footnote{Such a minimization can be seen as an instance of a multimarginal optimal transport problem~\citep{pass2015mot,AltBoi21}. }

The coupling perspective plays an important role in discrepancy theory and its applications. The recent algorithmic proof of Banaszczyk's theorem by \citet{BanDadGarLov18} relies on the equivalence between Banaszczyk's theorem\footnote{More precisely, in this statement we refer to the general result of~\cite{banaszczyk} about balancing vectors to lie in a convex body $K$; the case where $K$ is the scaled $\ell_\infty$ ball corresponds to the combinatorial discrepancy.} and the existence of sub-Gaussian distributions supported on $\{ \sum_{j=1}^n \sigma_j v_j \}_{\sigma \in \{\pm1\}^n}$, as established by \citet{DadGarLov19} . Their algorithm, known as the \emph{Gram-Schmidt walk}, focuses on correlating the entries of $\sigma$ in order to control the sub-Gaussian constant of $A\sigma$. Further, in applications of discrepancy theory such as randomized control trials, it is important to output not only a single signing but rather a \emph{distribution} over low-discrepancy signings for the purpose of inferring treatment effects~\citep{KriAzrKap19,TurMekRig20, HarSavSpiZha19}. 

To construct our relaxation, we replace the Rademacher distribution with the standard Gaussian distribution in the coupling interpretation \eqref{eq:disc_as_coupling} of combinatorial discrepancy. The \emph{Gaussian discrepancy} is defined to be
\begin{align}
\label{eq:gdisc}
    \discG(A)
    = \min\Bigl\{\mathbb{E}\norm{Ag}_\infty : g_j \sim \normal(0, 1)~\text{for all}~j\in [n] \,, \; g_1,\dotsc,g_n~\text{jointly Gaussian}\Bigr\}\,, 
\end{align}
where $\NN(0,1)$ denotes the standard normal distribution. Equivalently, the minimization is over covariance matrices $\Sigma$ for the random vector $g = (g_1, \ldots, g_n)$ that lie in the \emph{elliptope} $\mc{E}_n$, the set of all positive semidefinite matrices whose diagonal is the all-ones vector. Unlike Rademacher couplings which require, a priori, an exponential number of parameters to describe, joint Gaussian couplings admit a compact description that is completely determined by the covariance matrix $\Sigma \in \R^{n \times n}$.

Let us record the simple observation that Gaussian discrepancy is indeed a relaxation of combinatorial discrepancy.

\begin{proposition}\label{prop:discG_is_relaxation}
    For any matrix $A$, it holds that
    $\discG(A) 
    \leq \sqrt{2/\pi}  \disc(A)\,.$
\end{proposition}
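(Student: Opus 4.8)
The plan is to exhibit a single explicit feasible Gaussian coupling, built from an optimal combinatorial signing, and read off the constant $\sqrt{2/\pi}$ from it. First I would let $\sigma^\star \in \{\pm1\}^n$ attain the minimum in~\eqref{eq:disc}, so that $\disc(A) = \norm{A\sigma^\star}_\infty$. Then I would draw a \emph{single} scalar $\gamma \sim \normal(0,1)$ and define the random vector $g := \gamma\,\sigma^\star$, i.e.\ $g_j = \gamma\,\sigma^\star_j$ for each $j \in [n]$ — so all the randomness is concentrated in one sign flip of the optimal solution.

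Next I would check that $g$ is feasible for~\eqref{eq:gdisc}. Each coordinate $g_j = \pm\gamma$ is marginally $\normal(0,1)$ since the standard Gaussian is symmetric, and $(g_1,\dotsc,g_n)$ is jointly Gaussian because it is a linear image of the single Gaussian $\gamma$. Equivalently, the covariance matrix of $g$ is $\sigma^\star(\sigma^\star)^{\!\top}$, a rank-one positive semidefinite matrix whose diagonal entries are $(\sigma^\star_j)^2 = 1$; hence it lies in the elliptope $\mc E_n$.

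Finally I would evaluate the objective on this coupling: $\E\norm{Ag}_\infty = \E\bigl[\abs{\gamma}\,\norm{A\sigma^\star}_\infty\bigr] = \E\abs{\gamma}\,\cdot\,\norm{A\sigma^\star}_\infty = \sqrt{2/\pi}\,\disc(A)$, using the elementary fact $\E\abs{\gamma} = \sqrt{2/\pi}$ for $\gamma \sim \normal(0,1)$. Since $\discG(A)$ is an infimum over all feasible couplings, the claimed bound follows immediately.

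There is essentially no obstacle here; the only conceptual point worth flagging is the choice of coupling: one must use the perfectly correlated $g = \gamma\,\sigma^\star$ rather than sampling the $g_j$ independently (the latter would merely reproduce the $\Theta(\sqrt{n\log n})$ random-signing bound). Concentrating all randomness into a single Gaussian scalar multiplying the optimal signing is exactly what transfers the combinatorial optimum to the Gaussian relaxation with only the loss factor $\E\abs{\gamma}$.
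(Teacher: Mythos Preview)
Your proposal is correct and follows exactly the same approach as the paper: take an optimal signing $\sigma^\star$, set $g = \gamma\,\sigma^\star$ for a single standard Gaussian scalar $\gamma$, verify feasibility via the rank-one covariance $\sigma^\star(\sigma^\star)^\top \in \mc E_n$, and compute $\E\norm{Ag}_\infty = \E\abs{\gamma}\cdot\norm{A\sigma^\star}_\infty = \sqrt{2/\pi}\,\disc(A)$. The only difference is notational.
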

\begin{proof}
    Given an optimal signing $\sigma$ for $\disc(A)$, let $g_j = \sigma_j \xi$ for $j \in [n]$, where $\xi$ is a standard Gaussian (equivalently, the covariance matrix of $g$ is $\Sigma = \sigma\sigma^\T$).
    Then,
    \begin{align*}
        \discG(A)
        &\le \E\norm{Ag}_\infty
        = \norm{A\sigma}_\infty \E\abs{\xi}
        = \sqrt{2/\pi} \disc(A)\,.
    \end{align*}
\end{proof}
As a consequence, results on ordinary discrepancy, such as Spencer's theorem and Banaszczyk's theorem, immediately translate into bounds on the Gaussian discrepancy. 

\subsection{Notation}

We write $x \land y = \min\{x,y\}$, $x \lor y = \max\{x,y\}$, and $[n]=\{1, 2, \dotsc, n\}$. We write $\NN(\mu, \Sigma)$ for the Gaussian distribution with mean $\mu$ and covariance matrix $\Sigma$. $S^{n-1}$ denotes the unit sphere centered at the origin in $\R^n$, and $\mathbb S^n_+$ denotes the set of symmetric positive semidefinite $n\times n$ matrices. For $A,B \in \mathbb S^n_+$ we write $A \preceq B$ if $B-A\in\mathbb S^n_+$. For vectors $x \in \R^n$ we write $\norm{x}_p,\,p\in[1,\infty]$ for the $\ell_p$ norm of $x$, while for matrices $X$, $\norm{X}_{p \to q}$ denotes the induced operator norm from $\ell_p$ to $\ell_q$. We write $\langle \cdot, \cdot \rangle$ for both the Euclidean inner product between vectors and the Frobenius inner product between matrices. For two sequences of positive real numbers $(a_n), (b_n)$ we write $a_n \lesssim b_n$ if there exists a constant $C>0$ with $a_n \leq C b_n$ for all sufficiently large $n$, and write $a_n \asymp b_n$ if $a_n \lesssim b_n \lesssim a_n$. Thus, $a_n \lesssim b_n$ is a synonym for $a_n = O(b_n)$. The $n \times n$ identity matrix is denoted by $I_n$, and the all-ones vector is denoted by $\mb 1_n \in \R^n$. We define the elliptope to be $\mc{E}_n = \{ \Sigma \in \R^{n\times n}: \Sigma \succeq 0, \; \mathsf{diag}(\Sigma) = \mb 1_n \}$.

\subsection{Results}
\label{sscn:results}
\subsubsection{Comparisons between relaxations of discrepancy}
\label{scn:relaxations} 
\label{sscn:comparison_results}
In this section we develop an understanding of Gaussian discrepancy by comparing it to the \emph{vector discrepancy} and \emph{spherical discrepancy} relaxations of combinatorial discrepancy. Let us first define these notions and describe some results known about them.

The \emph{vector discrepancy} relaxation replaces the signs $\sigma_1, \ldots, \sigma_n \in \{\pm 1\}$ in combinatorial discrepancy \eqref{eq:disc} with unit vectors $u_1, \ldots, u_n \in S^{n-1}$. Formally,
\begin{align}
\label{eq:vdisc}
    \discv(A)
    &\deq \min\Bigl\{\max_{i\in [m]}{\Bigl\lVert \sum_{j=1}^n A_{i,j} u_j\Bigr\rVert_2} : u_1,\dotsc,u_n\in S^{n-1}\Bigr\}\,.
\end{align}
This problem is recast as a semidefinite program over the elliptope by constructing the Gram matrix, $\Sigma_{i, j} \deq \langle u_i, u_j \rangle$ for $i,j \in [n]$, of the unit vectors $u_1, \ldots, u_n$; see \citet{Niko13} for more details. 

Given a signing $\sigma \in \{ \pm 1 \}^n$ and $u \in S^{n-1}$, we can associate to it the unit vectors $u_j = \sigma_j u,\,j\in[n]$. From this we see that vector discrepancy is indeed a relaxation of discrepancy: \[\discv(A) \le \disc(A)\,.\] 
Vector discrepancy has been highly influential in discrepancy theory. It led to the initial algorithm of~\citet{Ban10} for Spencer's theorem, which uses a random walk guided by vector discrepancy solutions \citep[see also][]{BanDadGar16}, as well as the recent constructive proof of Banaszczyk's theorem \citep{BanDadGarLov18}.\footnote{\citet{BanDadGarLov18} remark that their Gram-Schmidt algorithm was ``\ldots inspired by the constructive proof of~\cite{DadGarLov19} for the existence of solutions to the Koml\'os vector coloring program of Nikolov."} Vector discrepancy has also been studied in its own right in the work of \citet{Niko13} which provides an in-depth analysis of this relaxation using SDP duality. 

\emph{Spherical discrepancy} was more recently introduced by \citet{jonesmcpartlon2020spheericaldisc}. It is obtained by relaxing the space of solutions in~\eqref{eq:disc} from ${\{\pm1\}}^n$ to the sphere of radius $\sqrt{n}$. Formally,
\begin{align}
\label{eq:sdisc}
    \discs(A)
    &\deq \min_{x \in \sqrt n \, S^{n-1}}{\norm{Ax}_\infty}\,.
\end{align}
\citet{jonesmcpartlon2020spheericaldisc} prove sharp bounds on spherical discrepancy in the setting of  Spencer's theorem ($\abs{A_{i,j}} \leq 1$ for all $i, j$) and the Koml\'{o}s conjecture ($\norm{A_{:,j}}_2 \leq 1$ for all $j \in [n]$) and apply their results to derive lower bounds for certain sphere covering problems. 

With these definitions in hand, it is natural to wonder about the relationships between the various relaxations of discrepancy. For example, which relaxation gives the best approximation to combinatorial discrepancy? Our first result, whose proof we defer to Section \ref{scn:further_comparisons}, provides a unifying probabilistic perspective on the three relaxations of discrepancy and leads to a straightforward comparison.

\begin{theorem}\label{thm:sdp_spherical_are_relaxations}
    For any matrix $A$, it holds that
    \begin{alignat}{3}
        \discv(A)
        &\asymp \min\bigl\{&&\max_{i\in [m]} \E\abs{{(Ag)}_i} &&\bigm\vert g \sim \normal(0, \Sigma)\,, \;\Sigma \in \mbb S_+^n\,, \;\diag \Sigma = \mb 1_n\bigr\}\,, \label{eq:vdisc_prob} 
        \\
        \discs(A)
        &\asymp \min\bigl\{ &&\E\max_{i\in [m]}{\abs{{(Ag)}_i}} &&\bigm\vert g \sim \normal(0, \Sigma)\,, \;\Sigma \in \mbb S_+^n\,,\; \tr \Sigma = n\bigr\}\,. \label{eq:sdisc_prob}
    \end{alignat}
    Here, $\mbb S_+^n$ is the set of symmetric positive semidefinite $n\times n$ matrices.
\end{theorem}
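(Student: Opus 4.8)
The plan is to derive both equivalences from two elementary facts about a centered Gaussian vector $g\sim\normal(0,\Sigma)$. First, each coordinate $(Ag)_i$ is a one‑dimensional centered Gaussian with variance $(A\Sigma A^\T)_{i,i}$, so $\E\abs{(Ag)_i}=\sqrt{2/\pi}\,\bigl((A\Sigma A^\T)_{i,i}\bigr)^{1/2}$. Second, $\norm{g}_2^2$ has mean $\tr\Sigma$ and is anticoncentrated below its mean, uniformly over the relevant family of covariances. Using only the first fact, \eqref{eq:vdisc_prob} turns out to be an equality up to the constant $\sqrt{2/\pi}$; the spherical equivalence \eqref{eq:sdisc_prob} additionally needs a rescaling argument that invokes the second fact. (All the minima in sight are attained by compactness of $(S^{n-1})^n$, of $\mc E_n$, and of $\{\Sigma\succeq 0:\tr\Sigma=n\}$.)

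\textbf{Vector discrepancy.} I would identify the two feasible sets through the Gram matrix $\Sigma_{j,k}=\langle u_j,u_k\rangle$ of unit vectors $u_1,\dots,u_n\in S^{n-1}$: as $(u_j)$ ranges over $(S^{n-1})^n$, the matrix $\Sigma$ ranges exactly over $\mc E_n$ (any such $\Sigma$ factors as $\Sigma=U^\T U$ with unit‑norm columns $u_j$), and $\norm{\sum_j A_{i,j}u_j}_2^2=(A\Sigma A^\T)_{i,i}$. Hence $\discv(A)^2=\min_{\Sigma\in\mc E_n}\max_{i\in[m]}(A\Sigma A^\T)_{i,i}$. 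On the probabilistic side, $(Ag)_i\sim\normal\bigl(0,(A\Sigma A^\T)_{i,i}\bigr)$ gives $\max_{i}\E\abs{(Ag)_i}=\sqrt{2/\pi}\,\bigl(\max_i(A\Sigma A^\T)_{i,i}\bigr)^{1/2}$, and minimizing over $\Sigma\in\mc E_n$ together with monotonicity of $t\mapsto\sqrt t$ yields exactly $\sqrt{2/\pi}\,\discv(A)$, which proves \eqref{eq:vdisc_prob}.

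\textbf{Spherical discrepancy.} For the bound ``right‑hand side $\lesssim\discs(A)$'', let $x^\star\in\sqrt n\,S^{n-1}$ attain $\discs(A)$ and use the rank‑one coupling $g=\eta\,x^\star$ with $\eta\sim\normal(0,1)$; its covariance $\Sigma=x^\star(x^\star)^\T$ has trace $\norm{x^\star}_2^2=n$, and $\E\max_i\abs{(Ag)_i}=\E\abs{\eta}\,\norm{Ax^\star}_\infty=\sqrt{2/\pi}\,\discs(A)$. For the reverse bound, fix any feasible $\Sigma$, write $g\sim\normal(0,\Sigma)$ and set $V:=\E\max_i\abs{(Ag)_i}=\E\norm{Ag}_\infty$. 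Markov's inequality gives $\Pr(\norm{Ag}_\infty>4V)\le 1/4$, and I claim there is a universal constant $c>0$ with $\Pr(\norm{g}_2<c\sqrt n)\le 1/4$ for every $\Sigma\succeq 0$ with $\tr\Sigma=n$. On the intersection of the complementary events, which has probability at least $1/2$, the vector $x:=\sqrt n\,g/\norm{g}_2$ lies on $\sqrt n\,S^{n-1}$ and obeys $\norm{Ax}_\infty=(\sqrt n/\norm{g}_2)\,\norm{Ag}_\infty\le 4V/c$, so $\discs(A)\le(4/c)\,V$; taking the infimum over $\Sigma$ completes \eqref{eq:sdisc_prob}.

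\textbf{The main obstacle.} The only step that is not a direct consequence of the two Gaussian identities above is the uniform lower‑tail bound on $\norm{g}_2$, and this is where I expect the actual work to lie; the extreme case is a rank‑one $\Sigma$, for which $\norm{g}_2^2$ is a single scaled $\chi^2_1$. Writing $\norm{g}_2^2=\sum_i\lambda_i Z_i^2$ with $\lambda_i\ge 0$, $\sum_i\lambda_i=n$, and $Z_i$ i.i.d.\ standard normal, I would use the Laplace‑transform bound
\[
\E\exp\Bigl(-t\sum_i\lambda_i Z_i^2\Bigr)=\prod_i(1+2t\lambda_i)^{-1/2}\le(1+2tn)^{-1/2}\qquad(t\ge 0),
\]
where the inequality holds because $\lambda\mapsto\log(1+2t\lambda)$ is concave and vanishes at $0$, hence superadditive on $[0,\infty)$, so $\sum_i\log(1+2t\lambda_i)\ge\log(1+2t\sum_i\lambda_i)=\log(1+2tn)$. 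A Chernoff bound then gives $\Pr(\norm{g}_2^2<c^2 n)\le\inf_{s\ge 0}e^{sc^2}(1+2s)^{-1/2}$, and the right‑hand side no longer depends on $\Sigma$ and tends to $0$ as $c\downarrow 0$; choosing $c$ small enough makes it at most $1/4$, as required.
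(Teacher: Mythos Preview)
Your proof is correct and follows essentially the same architecture as the paper's: the Gram-matrix identification for \eqref{eq:vdisc_prob}, and for \eqref{eq:sdisc_prob} the rank-one coupling $g=\eta x^\star$ in one direction together with Markov plus a lower-tail bound on $\norm{g}_2$ in the other. The differences are in the auxiliary tools. The paper handles both the moment comparison in \eqref{eq:vdisc_prob} and the anticoncentration of $\norm{g}_2$ by invoking a single lemma about log-concave distributions (reverse H\"older and a small-ball inequality for seminorms), whereas you work everything out explicitly from Gaussian identities: the exact relation $\E\abs{(Ag)_i}=\sqrt{2/\pi}\,\sqrt{(A\Sigma A^\T)_{i,i}}$ gives \eqref{eq:vdisc_prob} with a sharp constant rather than merely $\asymp$, and your Laplace-transform argument yields the uniform small-ball bound $\Pr(\norm g_2<c\sqrt n)\le\inf_{s\ge 0}e^{sc^2}(1+2s)^{-1/2}$ directly. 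Your route is more elementary and self-contained; the paper's route generalizes immediately beyond the Gaussian case.

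One terminological slip: a concave function vanishing at $0$ is \emph{sub}additive, i.e.\ $f(a+b)\le f(a)+f(b)$, not superadditive. Fortunately this is exactly what you need, since subadditivity applied iteratively gives $\sum_i f(\lambda_i)\ge f\bigl(\sum_i\lambda_i\bigr)$, which is the inequality $\sum_i\log(1+2t\lambda_i)\ge\log(1+2tn)$ you use. So the mathematics is right; only the word is wrong.
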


For comparison, we recall that
\begin{align*}
    \discG(A)
    &= \min\bigl\{\E\max_{i\in [m]}{\abs{{(Ag)}_i}} \bigm\vert g \sim \normal(0, \Sigma)\,, \;\Sigma \in \mbb S_+^n\,,\; \diag \Sigma = \mb 1_n\bigr\}\,.
\end{align*}
Hence, $\discv$ relaxes the objective of Gaussian discrepancy by moving the maximum over rows outside of the expectation, whereas $\discs$ relaxes the constraint of Gaussian discrepancy from $\diag \Sigma = \mb 1_n$ to $\tr \Sigma = n$.
In comparison, from Proposition~\ref{prop:discG_is_relaxation}, the usual definition of discrepancy can be understood as \emph{adding} a constraint to Gaussian discrepancy, namely that $\rank \Sigma = 1$. The following relationship between the notions of discrepancy is an immediate consequence.

\begin{corollary}\label{cor:comparison_between_disc}
    For any matrix $A$,
    \begin{align*}
        \discv(A) \vee \discs(A)
        &\lesssim \discG(A)
        \lesssim \disc(A)\,.
    \end{align*}
\end{corollary}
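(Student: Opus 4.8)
The plan is to read off all three inequalities from the probabilistic reformulations already in hand, so the argument is a short chain of comparisons rather than anything substantive. The right-hand inequality $\discG(A) \lesssim \disc(A)$ is nothing but Proposition~\ref{prop:discG_is_relaxation}, which in fact gives the sharper bound $\discG(A) \le \sqrt{2/\pi}\,\disc(A)$; no further work is needed there.

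For the two left-hand inequalities I would work from the identity
\[
\discG(A) = \min\bigl\{\E\max_{i\in[m]}\abs{(Ag)_i} \bigm\vert g \sim \normal(0,\Sigma),\ \Sigma \in \mbb S_+^n,\ \diag\Sigma = \mb 1_n\bigr\}\,.
\]
To compare with $\discv$, the key point is the elementary pointwise inequality: for any fixed feasible covariance $\Sigma$ (with $\diag\Sigma = \mb 1_n$) one has $\max_{i\in[m]}\E\abs{(Ag)_i} \le \E\max_{i\in[m]}\abs{(Ag)_i}$, since the maximum of expectations never exceeds the expectation of the maximum. Taking the minimum of both sides over the common feasible set $\{\Sigma \in \mbb S_+^n : \diag\Sigma = \mb 1_n\}$ and then invoking the characterization~\eqref{eq:vdisc_prob} of Theorem~\ref{thm:sdp_spherical_are_relaxations} yields $\discv(A) \lesssim \discG(A)$. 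To compare with $\discs$, I would instead observe that the constraint $\diag\Sigma = \mb 1_n$ forces $\tr\Sigma = n$, so the feasible set in the definition of $\discG(A)$ is contained in the feasible set $\{\Sigma \in \mbb S_+^n : \tr\Sigma = n\}$ appearing in the characterization~\eqref{eq:sdisc_prob} of $\discs(A)$; minimizing the identical objective $\E\max_{i\in[m]}\abs{(Ag)_i}$ over the larger set can only decrease the value, giving $\discs(A) \lesssim \discG(A)$. Combining the two bounds produces $\discv(A) \vee \discs(A) \lesssim \discG(A)$.

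There is essentially no obstacle once Theorem~\ref{thm:sdp_spherical_are_relaxations} and Proposition~\ref{prop:discG_is_relaxation} are available; the only bookkeeping subtlety is tracking the difference between $\le$ and $\lesssim$, which enters purely through the $\asymp$ in the statement of that theorem (indeed the $\discv$ comparison holds up to an absolute constant and the $\discs$ comparison holds with constant $1$ once the reformulation is granted). The genuine content of the corollary is therefore packaged into those two prior results, whose proofs — deferred to later sections — are the part that requires real effort.
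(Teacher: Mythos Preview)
Your proposal is correct and matches the paper's own reasoning essentially verbatim: the right-hand bound is Proposition~\ref{prop:discG_is_relaxation}, while the two left-hand bounds follow immediately from Theorem~\ref{thm:sdp_spherical_are_relaxations} by noting that $\discv$ relaxes the objective (max of expectations versus expectation of the max) and $\discs$ relaxes the constraint ($\diag\Sigma = \mb 1_n$ implies $\tr\Sigma = n$). The paper presents this as ``an immediate consequence'' with no separate proof, and your write-up simply spells out those two one-line observations.
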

In words, Gaussian discrepancy is a tighter relaxation of discrepancy than vector discrepancy and spherical discrepancy. Moreover, we show via a suite of examples in Section \ref{scn:further_comparisons} that none of the inequalities in Corollary \ref{cor:comparison_between_disc} can be reversed up to constant factor and that spherical discrepancy and vector discrepancy are incomparable in general. 

Although Gaussian discrepancy is always larger than the vector discrepancy, the two relaxations are in fact closely related. Indeed, their common feasible set is the elliptope, which consists of the Gram matrices of collections of unit vectors or equivalently the covariance structures of feasible Gaussian couplings. The connection between these two relaxations yields our main tools for bounding Gaussian discrepancy as well as an $O(\sqrt{\log m})$-factor approximation algorithm for computing Gaussian discrepancy.

Our main inequality for controlling Gaussian discrepancy relies on a notion of \textit{rank-constrained vector discrepancy}, defined as follows. For every rank $r \in [n]$, let
\begin{align}\label{eq:vdisc_r}
    \discv_r(A)
    &\deq \min\Bigl\{\max_{i\in [m]}{\Bigl\lVert \sum_{j=1}^n A_{i,j} u_j\Bigr\rVert_2} : u_1,\dotsc,u_n\in S^{r-1}\Bigr\}\,.
\end{align}
Equivalently we require the Gram matrix $\Sigma \deq {(\langle u_i, u_j \rangle)}_{i,j\in [n]}$ to have rank at most $r$. Since a rank-$1$ matrix $\Sigma$ corresponds precisely to a signing, note that
\begin{align*}
    \discv(A) = \discv_n(A)
    &\le \discv_{n-1}(A) \le \cdots \le \discv_1(A) = \disc(A)\,.
\end{align*} 
The next result compares Gaussian discrepancy with this rank-constrained problem and is a crucial ingredient in our study of online Gaussian discrepancy.

\begin{proposition}\label{prop:discG_low_rank}
    For any $r\in [n]$, it holds that $\discG(A) \le \sqrt r \discv_r(A).$
\end{proposition}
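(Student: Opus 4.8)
The plan is to turn an optimal rank-$r$ vector-discrepancy solution into a feasible Gaussian coupling by projecting a standard $r$-dimensional Gaussian onto the given unit vectors, and then to bound the objective by a single application of Cauchy--Schwarz. First I would fix unit vectors $u_1, \dotsc, u_n \in S^{r-1}$ attaining $\discv_r(A)$, so that $\norm{\sum_{j=1}^n A_{i,j} u_j}_2 \le \discv_r(A)$ for all $i \in [m]$. Drawing $G \sim \normal(0, I_r)$, I set $g_j := \langle u_j, G\rangle$ for $j \in [n]$. Since the $u_j$ are unit vectors, each $g_j \sim \normal(0,1)$, and the whole vector $g = (g_1,\dotsc,g_n)$ is jointly Gaussian with covariance matrix equal to the Gram matrix $(\langle u_i, u_j\rangle)_{i,j\in[n]}$, which lies in the elliptope $\mc E_n$. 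Hence $g$ is admissible in the definition~\eqref{eq:gdisc} of $\discG(A)$.

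Next, writing $w_i := \sum_{j=1}^n A_{i,j} u_j \in \R^r$, linearity of the inner product gives ${(Ag)}_i = \langle w_i, G\rangle$, so Cauchy--Schwarz yields $\abs{{(Ag)}_i} \le \norm{w_i}_2 \norm{G}_2 \le \discv_r(A)\,\norm{G}_2$ for every $i \in [m]$. Taking the maximum over $i$ and then the expectation, and using $\E\norm{G}_2 \le (\E\norm{G}_2^2)^{1/2} = \sqrt r$ (Jensen's inequality, using $\E\norm{G}_2^2 = r$), I obtain
\[
\discG(A) \le \E\max_{i\in[m]}\abs{{(Ag)}_i} \le \discv_r(A)\,\E\norm{G}_2 \le \sqrt r\,\discv_r(A)\,,
\]
which is the claimed bound.

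There is no serious obstacle in this argument; the only point that needs care is verifying that the proposed coupling is genuinely feasible, i.e.\ that each marginal $g_j$ is a \emph{standard} Gaussian, which is exactly where the unit-norm constraint $u_j \in S^{r-1}$ is used. It is worth noting that the $\sqrt r$ factor comes entirely from the deterministic bound $\max_i\abs{\langle w_i, G\rangle} \le \discv_r(A)\norm{G}_2$; replacing it by a Gaussian tail estimate over the $m$ coordinates $\langle w_i, G\rangle$ would instead produce an $O(\discv_r(A)\sqrt{\log m})$ bound, so Proposition~\ref{prop:discG_low_rank} is precisely the variant that trades the $\sqrt{\log m}$ dependence on the number of rows for a $\sqrt r$ dependence on the rank --- which is what is needed when $r$ is taken to be small in the online application.
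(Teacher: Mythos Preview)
Your proof is correct and follows essentially the same route as the paper's: project a standard Gaussian in $\R^r$ onto the optimal $\discv_r$ unit vectors to obtain a feasible coupling, then bound $\max_i\abs{\langle w_i,G\rangle}$ by $\max_i\norm{w_i}_2\,\norm{G}_2$ and use $\E\norm{G}_2\le\sqrt r$. Your closing remark about the $\sqrt r$ versus $\sqrt{\log m}$ tradeoff is also exactly the distinction the paper draws between this proposition and Proposition~\ref{prop:discG_union_bound}.
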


\begin{proof}
    Let $u_1,\dotsc,u_n \in S^{r-1}$ be an optimal solution for $\SDP_r(A)$. Then, if $\xi$ denotes a standard Gaussian vector in $\R^r$, we can define $g_j \deq \langle u_j, \xi \rangle$ for $j \in [n]$, and it can be easily checked that this is feasible for the definition of Gaussian discrepancy.
    Hence,
    \begin{align}
    \label{eqn:low_rank_useful}
        \discG(A) 
        \le \E\max_{i\in [m]}{\Bigl\lvert \Bigl\langle \sum_{j=1}^n A_{i,j} u_j, \xi \Bigr\rangle \Bigr\rvert}
        \le \max_{i\in [m]}{\Bigl\lVert \sum_{j=1}^n A_{i,j} u_j \Bigr\rVert_2}  \E\norm{\xi}_2 
        \le \sqrt r \SDP_r(A)\,.
    \end{align}
    The last inequality uses the standard probabilistic fact that $\E\norm\xi_2 \le \sqrt r$.
    When $r=1$, this can be improved to $\E\abs\xi = \sqrt{2/\pi}$, which recovers Proposition~\ref{prop:discG_is_relaxation}.
\end{proof}

Applying a similar strategy but using the union bound instead of Cauchy--Schwarz shows that vector discrepancy upper bounds Gaussian discrepancy up to a logarithmic term.

\begin{proposition}\label{prop:discG_union_bound}
    It holds that 
    $\discG(A) \lesssim \sqrt{\log (2m)} \, \discv(A).$
\end{proposition}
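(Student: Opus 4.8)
The plan is to mimic the argument of Proposition~\ref{prop:discG_low_rank}, but replace the crude bound $\E\norm\xi_2 \le \sqrt r$ — which forces a dimension-dependent loss — with a union bound over the $m$ rows that exploits Gaussian concentration. Concretely, let $u_1,\dotsc,u_n \in S^{n-1}$ be an optimal solution for $\discv(A)$, let $\xi \sim \normal(0,I_n)$, and set $g_j := \langle u_j,\xi\rangle$. As observed in the proof of Proposition~\ref{prop:discG_low_rank}, the covariance of $g$ lies in the elliptope $\mc E_n$, so $g$ is feasible for the definition of Gaussian discrepancy, and hence
\[
    \discG(A) \le \E\max_{i\in[m]}\Bigl\lvert\Bigl\langle\sum_{j=1}^n A_{i,j}u_j,\ \xi\Bigr\rangle\Bigr\rvert.
\]
Write $w_i := \sum_{j=1}^n A_{i,j}u_j \in \R^n$, so that $\langle w_i,\xi\rangle \sim \normal(0,\norm{w_i}_2^2)$ with $\norm{w_i}_2 \le \discv(A)$ for every $i$. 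The remaining task is to bound $\E\max_{i\in[m]}\lvert\langle w_i,\xi\rangle\rvert$.

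The key step is the standard maximal inequality for sub-Gaussian random variables: if $Z_1,\dotsc,Z_m$ are (not necessarily independent) centered Gaussians each with variance at most $s^2$, then $\E\max_{i}\lvert Z_i\rvert \lesssim s\sqrt{\log(2m)}$. I would either cite this directly or prove it in one line via the exponential Markov inequality: for any $\lambda>0$,
\[
    \exp\bigl(\lambda\,\E\max_i\lvert Z_i\rvert\bigr) \le \E\exp\bigl(\lambda\max_i\lvert Z_i\rvert\bigr) \le \sum_{i=1}^m \E\exp(\lambda\lvert Z_i\rvert) \le 2m\exp(\lambda^2 s^2/2),
\]
using $\E\exp(\lambda\lvert Z_i\rvert) \le 2\E\exp(\lambda Z_i) = 2\exp(\lambda^2\operatorname{Var}(Z_i)/2) \le 2\exp(\lambda^2 s^2/2)$; taking logarithms, dividing by $\lambda$, and optimizing over $\lambda$ gives $\E\max_i\lvert Z_i\rvert \le s\sqrt{2\log(2m)}$. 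Applying this with $Z_i = \langle w_i,\xi\rangle$ and $s = \discv(A)$ yields $\discG(A) \le \sqrt{2\log(2m)}\,\discv(A)$, which is the claimed bound.

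There is essentially no obstacle here — the only thing to be careful about is the factor of $2$ inside the logarithm (needed because we bound $\lvert Z_i\rvert$ rather than $Z_i$, effectively a union bound over $2m$ one-sided tails), and the fact that the maximal inequality requires no independence among the $Z_i$, which is precisely what lets us ignore the correlation structure induced by the shared vector $\xi$. One could also remark, in parallel with the last line of the proof of Proposition~\ref{prop:discG_low_rank}, that this recovers a $\sqrt{\log(2m)}$ factor in place of the $\sqrt r \le \sqrt n$ factor, and is therefore the better bound whenever $m$ is subexponential in $n$.
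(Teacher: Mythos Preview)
Your proposal is correct and follows essentially the same approach as the paper: both take an optimal vector-discrepancy solution $u_1,\dotsc,u_n$, set $g_j=\langle u_j,\xi\rangle$ with $\xi\sim\normal(0,I_n)$, and bound $\E\max_i\lvert\langle\sum_j A_{i,j}u_j,\xi\rangle\rvert$ via the sub-Gaussian maximal inequality to obtain $\discG(A)\le\sqrt{2\log(2m)}\,\discv(A)$. The only difference is cosmetic: the paper cites the maximal inequality from \citet{boucheronlugosimassart2013concentration}, whereas you supply the standard one-line exponential-moment proof.
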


\begin{proof}
    Let $u_1,\dotsc,u_n\in\R^n$ be an optimal solution for $\SDP_n(A)$, and let $\xi\sim\NN(0, I_n)$. As in the proof of Proposition~\ref{prop:discG_low_rank},
    \begin{align*}
        \discG(A)
        &\le \E\max_{i\in [m]}{\Bigl\lvert \Bigl\langle \sum_{j=1}^n A_{i,j} u_j, \xi \Bigr\rangle \Bigr\rvert}\,.
    \end{align*}
    The random variable $\langle \sum_{j=1}^n A_{i,j} u_j, \xi \rangle$ is sub-Gaussian with parameter $\norm{\sum_{j=1}^n A_{i,j} u_j}_2^2$. Thus the standard maximal inequality for sub-Gaussian random variables \citep[see, \textit{e.g.},][Theorem 2.5]{boucheronlugosimassart2013concentration} yields
    \begin{align}
    \label{eqn:union_bound_useful}
        \discG(A)
        &\le \sqrt{2\ln(2m)} \max_{i\in [m]}{\Bigl\lVert \sum_{j=1}^n A_{i,j} u_j \Bigr\rVert_2}
        = \sqrt{2\ln(2m)} \SDP_n(A)\,.
    \end{align}
\end{proof} 

We record three useful consequences of Proposition \ref{prop:discG_union_bound}. When combined with Theorem \ref{thm:sdp_spherical_are_relaxations} it implies that Gaussian discrepancy is approximated by vector discrepancy up to a logarithmic factor. Since vector discrepancy admits an SDP formulation, it can be computed in polynomial time using methods from convex optimization \citep{nesterov2018cvxopt}, and in turn this yields our claimed $O(\sqrt{\log m})$-factor approximation algorithm for Gaussian discrepancy.

Second, Proposition \ref{prop:discG_union_bound} allows us to draw a new connection between spherical discrepancy and vector discrepancy. Combining this result with Corollary~\ref{cor:comparison_between_disc} and  Proposition \ref{prop:discG_low_rank}, we obtain the following. 
\begin{corollary}\label{cor:spherical_and_sdp}
    For any matrix $A$, it holds that 
    $\discs(A) \lesssim \sqrt{n \land \log m} \discv(A).$
\end{corollary}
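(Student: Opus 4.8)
The plan is to sandwich $\discs(A)$ below $\discG(A)$ and then bound $\discG(A)$ from above in two different ways, one scaling like $\sqrt n$ and the other like $\sqrt{\log m}$, taking whichever is smaller. All three inputs have already been established: the ordering $\discs(A) \lesssim \discG(A)$ comes from Corollary~\ref{cor:comparison_between_disc}, the $\sqrt r$-bound from Proposition~\ref{prop:discG_low_rank}, and the $\sqrt{\log m}$-bound from Proposition~\ref{prop:discG_union_bound}.

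Concretely, I would first invoke Corollary~\ref{cor:comparison_between_disc} to get $\discs(A) \lesssim \discG(A)$, with an absolute implicit constant (ultimately traced back to Theorem~\ref{thm:sdp_spherical_are_relaxations}). Next, taking $r = n$ in Proposition~\ref{prop:discG_low_rank} and recalling that $\discv_n(A) = \discv(A)$ yields $\discG(A) \le \sqrt n \, \discv(A)$, hence $\discs(A) \lesssim \sqrt n \, \discv(A)$. Separately, Proposition~\ref{prop:discG_union_bound} gives $\discG(A) \lesssim \sqrt{\log(2m)}\, \discv(A)$, hence $\discs(A) \lesssim \sqrt{\log(2m)}\, \discv(A)$. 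Since both estimates hold simultaneously, I would combine them into $\discs(A) \lesssim \min\{\sqrt n,\, \sqrt{\log(2m)}\}\, \discv(A)$, and note that $\min\{\sqrt n,\, \sqrt{\log(2m)}\} \asymp \sqrt{n \land \log m}$, which is exactly the claimed bound.

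I expect essentially no obstacle here: the statement is a bookkeeping consequence of the preceding propositions. The only mild points to keep in mind are the degenerate small-$m$ regime (where $\log m$ is tiny or zero), handled by the $\asymp$ convention together with the harmless passage from $\log m$ to $\log(2m)$ in the intermediate steps, and a quick check that the constant in $\discs \lesssim \discG$ from Corollary~\ref{cor:comparison_between_disc} does not depend on $A$, $m$, or $n$.
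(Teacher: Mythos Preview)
Your proposal is correct and matches the paper's own argument essentially verbatim: the paper also derives the corollary by combining Corollary~\ref{cor:comparison_between_disc} (for $\discs \lesssim \discG$), Proposition~\ref{prop:discG_low_rank} with $r=n$ (for the $\sqrt n$ bound), and Proposition~\ref{prop:discG_union_bound} (for the $\sqrt{\log m}$ bound). There is nothing to add.
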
 
As described in detail in Section \ref{scn:further_comparisons}, this inequality is tight and also may not be reversed --- in fact it is possible to have $\discs(A) = 0$ while $\discv(A) > 0$ (see Example \ref{ex:zero_sdisc_nonzero_vdisc}). We remark that this sharp result falls out naturally with Gaussian discrepancy as a mediator between spherical and vector discrepancy.

Finally, Proposition \ref{prop:discG_union_bound} leads to a simple algorithm that achieves Banaszczyk's bound for Gaussian discrepancy. The main result of \citet{Niko13} states that the Koml\'{o}s conjecture holds for vector discrepancy. Hence, a solution to the SDP formulation of vector discrepancy yields a feasible covariance matrix for Gaussian discrepancy with objective value $O(\sqrt{ \log m  })$.\footnote{This can be further refined to $O(\sqrt{ \log(m \wedge n)})$ using standard reductions.} Note that since Gaussian discrepancy is a relaxation, existing approaches for combinatorial discrepancy also achieve Banaszczyk's bound for Gaussian discrepancy with much more involved algorithms \citep{BanDadGar16,LevRamRot17,BanDadGarLov18}.

This raises the questions of whether or not the Koml\'{o}s conjecture (Conjecture \ref{conj:komlos}) or online Banaszczyk (Conjecture \ref{conj:online_banaszczyk}) can be proved for Gaussian discrepancy. More broadly, it is natural to ask whether unresolved conjectures about combinatorial discrepancy can be solved for a given relaxation. Indeed, \citet{Niko13} and \citet{jonesmcpartlon2020spheericaldisc} establish the Koml\'{o}s conjecture for vector and spherical discrepancy, respectively. While we are not able to establish a Gaussian version of the Koml\'{o}s conjecture, it serves as a tantalizing question for further study; see Section~\ref{scn:open_problems} for further discussion. On the other hand, our main algorithmic result establishes a Gaussian discrepancy variant of the online Banaszczyk bound, as discussed in the next section. 

\subsubsection{Online Banaszczyk bound for Gaussian discrepancy}
\label{sscn:online_disc}

We begin by recalling the setting of online discrepancy minimization with an oblivious adversary. First, an adversary picks vectors $v_1, \dots, v_T$ of $\ell_2$ norm at most $1$. Then, in each round $t=1,2,\dots,T$ the vector $v_t$ is revealed to the algorithm, which must output a sign $\sigma_t \in \{\pm 1\}$. The aim of the algorithm is to minimize the maximum discrepancy incurred, i.e.\ $\max_{t\in [T]}{\norm{\sum_{s=1}^t \sigma_s v_s}_\infty}$. Equivalently, in each round the algorithm is required to choose a \textit{vector} of signs $\sigma^{(t)} \in \{\pm1\}^t$ that obeys the following  \emph{consistency condition}:
\begin{equation}
\label{eq:ordinary_consistency}
\sigma^{(t)}_{[t-1]} = \sigma^{(t-1)}\,,
\end{equation}
where $x_{[t-1]}$ denotes the restriction of $x \in \R^t$ to the first $t - 1$ coordinates. That is, the new signing $\sigma^{(t)} \in \{\pm1\}^t$ chosen in round $t$ is forbidden from changing the signs specified in previous rounds.

The latter formulation motivates our Gaussian discrepancy variant of the online discrepancy problem. At each round $t$, we require the algorithm to output a $t$-dimensional correlation matrix $\Sigma^{(t)}$ which satisfies a consistency condition analogous to \eqref{eq:ordinary_consistency}, namely
\begin{equation}\label{eqn:gaussian_consistency}
    \Sigma^{(t)}_{[t-1]\times[t-1]} = \Sigma^{(t-1)}\,.
\end{equation}
Thus after round $t$, the correlations between the first $t$ coordinates must remain fixed. The aim is still to minimize the maximum discrepancy incurred, i.e.\ $\max_{t\in [T]}\E\norm{\sum_{s=1}^t g_s v_s}_\infty$, where $g=(g_1, \dotsc, g_T) \sim \normal(0, \Sigma^{(T)})$. A simple argument based on the Cholesky decomposition (see Appendix~\ref{scn:formulations equiv} for details) shows that any such sequence $\Sigma^{(1)}, \dotsc, \Sigma^{(T)}$ can be realized as a sequence of Gram matrices of initial segments of a sequence of unit vectors $u_1, \dotsc, u_T \in \ell_2$.
Hence, it is equivalent to require that the algorithm output a unit vector $u_t \in \ell_2$ at each round $t$, and then set $\Sigma^{(t)}_{i,j} = \langle u_i, u_j\rangle_{\ell_2}$. We arrive at the following formulation of online Gaussian discrepancy. As in online combinatorial discrepancy, we allow the algorithm to be randomized. 

\begin{problem}[Online Gaussian discrepancy with oblivious adversary]
\label{prob:online_gaussian_discrepancy}
An adversary selects vectors $v_1, \dotsc, v_T$ with $\ell_2$ norm at most $1$ in advance. At each round $t \in [T]$, the algorithm observes $v_t$ and outputs a random unit vector $u_t \in \ell_2$. The algorithm aims to minimize 
\begin{equation*}
    \discG( v_1, \ldots, v_T; \Sigma\rp{T} ) = \max\limits_{t \in [T]}  \E \Bigl\lVert\sum\limits_{s=1}^t g_s v_s\Bigr\rVert_\infty, \quad g \sim \NN(0, \Sigma\rp{T})
\end{equation*}
with high probability over the random correlation matrix $\Sigma_{i,j}^{(T)} = \langle u_i, u_j \rangle_{\ell_2}$. 
\end{problem}

We show in Appendix~\ref{scn:formulations equiv} that, without loss of generality, it suffices to consider $u_t$ whose support is a subset of the first $t$ coordinates.

One strategy for generating feasible couplings in each round is to first fix a \textit{rank} parameter $r \geq 1$ in advance and output a unit vector $u_t \in S^{r-1}$ in each round. Our main result is an algorithm of this form that solves the Gaussian discrepancy variant of the online Banaszczyk conjecture (Conjecture~\ref{conj:online_banaszczyk}). 

\begin{theorem}[Online Banaszczyk bound for Gaussian discrepancy]
\label{thm:online_gauss_banaszczyk} 
Let $v_1, \ldots, v_T \in \R^m$ denote vectors of $\ell_2$ norm at most $1$ selected in advance by an adversary. For all positive integers $r \geq 2$, there is a randomized online algorithm that with probability at least $1 - \delta$ outputs $u_1, \ldots, u_T \in S^{r-1}$ such that
\[
\discG( v_1, \ldots, v_T; \Sigma\rp{T} ) = O\bigl(\sqrt{\log(mT/\delta)}\bigr),
\]
where $\Sigma\rp{T}_{i,j} = \langle u_i, u_j \rangle$. The algorithm runs in time $O(mr)$ per round.
\end{theorem}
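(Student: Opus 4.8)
The plan is to first reduce the objective to an online, prefix version of rank-$r$ vector balancing, and then to solve that by running a self-balancing walk on the sphere.

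\emph{Step 1 (reduction).} I would repeat the argument in the proof of Proposition~\ref{prop:discG_low_rank}. If the algorithm outputs unit vectors $u_1,\dots,u_T\in S^{r-1}$ and we take $\xi\sim\NN(0,I_r)$ and $g_s:=\langle u_s,\xi\rangle$, then the $i$-th coordinate of $\sum_{s\le t}g_sv_s$ is $\langle w_t(i),\xi\rangle$, where $w_t(i):=\sum_{s\le t}v_s(i)\,u_s\in\R^r$, so by Cauchy--Schwarz (pointwise in the $u$'s) and $\E_\xi\lVert\xi\rVert_2\le\sqrt r$,
\[
  \E\Bigl\lVert\sum_{s\le t}g_sv_s\Bigr\rVert_\infty
  = \E_\xi\max_{i\in[m]}\bigl\lvert\langle w_t(i),\xi\rangle\bigr\rvert
  \le \sqrt r\,\max_{i\in[m]}\lVert w_t(i)\rVert_2 .
\]
Hence, as $r$ is fixed, it suffices to produce the $u_t$ online so that $\max_{t\in[T]}\max_{i\in[m]}\lVert w_t(i)\rVert_2=O(\sqrt{\log(mT/\delta)})$ with probability $\ge1-\delta$; that is exactly an online, prefix form of the rank-$r$ vector discrepancy problem~\eqref{eq:vdisc_r} with a Banaszczyk-type guarantee. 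The normalization of Appendix~\ref{scn:formulations equiv} additionally lets $u_t$ be supported on the first $\min\{t,r\}$ coordinates, which is what keeps the per-round cost $O(mr)$ rather than $O(mT)$.

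\emph{Step 2 (algorithm).} I would adapt the self-balancing walk of~\citet{alweiss2020discrepancy} to the sphere. Maintain $M_{t-1}:=\sum_{s<t}v_su_s^\T\in\R^{m\times r}$, whose $i$-th row is $w_{t-1}(i)^\T$. On receiving $v_t$, form the ``discrepancy gradient'' $p_{t-1}:=M_{t-1}^\T v_t=\sum_{s<t}\langle v_s,v_t\rangle\,u_s\in\R^r$ and sample $u_t$ from a distribution on $S^{r-1}$ with conditional mean $-p_{t-1}/c$ (first truncating $p_{t-1}$ to $\ell_2$-length $\le c$, an event one shows a posteriori is void on the good event) for a threshold $c\asymp\sqrt{\log(mT/\delta)}$, chosen otherwise as isotropic as possible, say with $\E[u_tu_t^\T\mid\mc{F}_{t-1}]\preceq (C/r)\,I_r$ for an absolute constant $C$; then set $M_t:=M_{t-1}+v_tu_t^\T$. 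This ``isotropic tilt'' of a uniform direction toward $-p_{t-1}$ is where $r\ge2$ enters the construction: for $r=1$ the set $S^0=\{\pm1\}$ leaves no room for an isotropic component. Forming $p_{t-1}$, sampling $u_t$, and the rank-one update of $M_t$ each cost $O(mr)$ per round. As in the Rademacher case, the conditional mean produces the contraction $\E[M_t\mid\mc{F}_{t-1}]=(I_m-c^{-1}v_tv_t^\T)M_{t-1}$, pulling every row of $M_t$ back toward the origin.

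\emph{Step 3 (potential argument) and the main obstacle.} I would fix a $\tfrac12$-net $\{\theta_1,\dots,\theta_N\}$ of $S^{r-1}$ with $N\le 6^r=O(1)$, so that $\lVert w\rVert_2\le 2\max_\ell\lvert\langle w,\theta_\ell\rangle\rvert$, and track $\Phi_t:=\sum_{i\in[m],\,\ell\in[N]}\cosh\bigl(\lambda\langle w_t(i),\theta_\ell\rangle\bigr)$ for a parameter $\lambda$. Each scalar $\langle w_t(i),\theta_\ell\rangle$ evolves additively by $v_t(i)\langle u_t,\theta_\ell\rangle$, of magnitude $\le\lvert v_t(i)\rvert\le1$; applying the $\cosh$ addition formula and taking $\E[\,\cdot\mid\mc{F}_{t-1}]$, the $\sinh$ cross-terms combine via $\E[u_t\mid\mc{F}_{t-1}]=-p_{t-1}/c$ into a genuinely negative drift $\propto\sum_\ell\langle p_{t-1},\theta_\ell\rangle^2$, while the remaining growth is $O(\lambda^2)$ and is controlled using $\lVert v_t\rVert_2\le1$ together with the near-isotropy $\E[u_tu_t^\T\mid\mc{F}_{t-1}]\preceq (C/r)I_r$. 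Following the analysis of~\citet{alweiss2020discrepancy}, one chooses the pair $(\lambda,c)$ so that, on the event $\{\max_i\lVert w_{t-1}(i)\rVert_2\le c\}$, $\Phi$ is a supermartingale up to a $1+O(1/T)$ multiplicative slack per step; a Doob-type maximal inequality together with a union bound over $t$ then gives $\max_{t,i,\ell}\lvert\langle w_t(i),\theta_\ell\rangle\rvert\le c$ with probability $\ge1-\delta$, and combined with the net and Step~1 this yields $\discG(v_1,\dots,v_T;\Sigma\rp{T})=O(\sqrt r\,\sqrt{\log(mT/\delta)})=O(\sqrt{\log(mT/\delta)})$. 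The hard part, exactly as for the $\pm1$ self-balancing walk, will be establishing this (near-)supermartingale property with the correct relation between $\lambda$ and $c\asymp\sqrt{\log(mT/\delta)}$: the negative drift degenerates whenever $p_{t-1}$ is nearly orthogonal to the current rows, and one must show that in that regime the $O(\lambda^2)$ growth is negligible over the horizon; transporting this argument to the sphere also demands the quantitative ``isotropic tilt'' of Step~2, so that the second-moment term really carries the saving factor $1/r$ while the increments stay bounded by $1$. (A cleaner but less self-contained route, worth recording, is to run any online $\pm1$ algorithm with an online-Banaszczyk-type guarantee and embed its signs via $u_t=\sigma_te_1$; since then $\Sigma\rp{T}=\sigma\sigma^\T$ has rank $1$, Proposition~\ref{prop:discG_is_relaxation} gives $\discG=\sqrt{2/\pi}\,\max_t\lVert\sum_{s\le t}\sigma_sv_s\rVert_\infty$ with no extra logarithmic loss.)
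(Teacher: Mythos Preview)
Your Step~1 reduction is fine and mirrors the paper's use of Proposition~\ref{prop:discG_low_rank}. The algorithmic core, however, is quite different from the paper's and contains a real gap.

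The paper does \emph{not} adapt the self-balancing walk. It instead extends the Gaussian fixed-point walk of \citet{LiuSahSawhney21} to $\R^r$: Section~\ref{scn:construction_mc} constructs, for each $r\ge 2$, a Markov chain on $\R^r$ with increments in $S^{r-1}$ and stationary distribution exactly $\NN(0,\sigma_{\star,r}^2 I_r)$, where $\sigma_{\star,r}=1/(2\sqrt{r-1})$; passing to $r\ge 2$ is precisely what circumvents the parity obstruction that rules out such a chain in one dimension. Algorithm~\ref{algo:rgfpw} then maintains the exact invariant $W_t\sim\NN(0,\sigma_{\star,r}^2 I_{m\times r})$ at every step (Proposition~\ref{prop:online_distribution}), so $\max_{t,i}\lVert w_t(i)\rVert_2$ is controlled directly by Gaussian concentration, giving $\max_{t,i}\lVert w_t(i)\rVert_2 \lesssim 1\vee\sqrt{\log(mT/\delta)/r}$ (Theorem~\ref{thm:analysis_of_alg}). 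The proof of Theorem~\ref{thm:online_gauss_banaszczyk} then combines this with Proposition~\ref{prop:discG_low_rank} when $r\le\log(mT/\delta)$ and with Proposition~\ref{prop:discG_union_bound} when $r\ge\log(mT/\delta)$, obtaining a bound independent of $r$.

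The gap in your route is the threshold $c\asymp\sqrt{\log(mT/\delta)}$ asserted in Step~3. The self-balancing walk of \citet{alweiss2020discrepancy} yields $c\asymp\log(mT/\delta)$, and improving that to $\sqrt{\log}$ for scalar signs would already settle Conjecture~\ref{conj:online_banaszczyk}. The $1/r$ saving in the second moment you invoke shrinks the growth term in the $\cosh$ potential, but it does not alter the structural relationship between the drift (which scales as $1/c$) and the final tail requirement $\lambda c\gtrsim\log(mT/\delta)$; you have offered no mechanism by which the sphere version breaks the $\log$ barrier, and you yourself flag this as ``the hard part''. With $c\asymp\log(mT/\delta)$, Step~1 delivers only $\discG=O(\sqrt r\,\log(mT/\delta))$. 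Separately, even granting your target $c$, your bound $O(\sqrt r\,\sqrt{\log(mT/\delta)})$ carries an $r$-dependent constant, whereas the paper's is uniform in $r$; this is exactly why the paper needs the two-case combination of Propositions~\ref{prop:discG_low_rank} and~\ref{prop:discG_union_bound} rather than Step~1 alone. Finally, your parenthetical alternative of embedding an online $\pm1$ Banaszczyk algorithm via $u_t=\sigma_t e_1$ is circular: the existence of such an algorithm is precisely Conjecture~\ref{conj:online_banaszczyk}, which remains open.
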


The proof of Theorem \ref{thm:online_gauss_banaszczyk} is the main content of Section \ref{scn:banaszczyk}. If we could prove this theorem with $r = 1$, then the online Banaszczyk problem (Conjecture~\ref{conj:online_banaszczyk}) would follow from the proof of Proposition~\ref{prop:discG_is_relaxation}. Unfortunately, there is an obstruction preventing us from considering $r=1$, as described further below. Nevertheless, Theorem \ref{thm:online_gauss_banaszczyk} shows that we are `one rank away' from establishing Conjecture~\ref{conj:online_banaszczyk}. 

Our algorithm is based on an intriguing idea in the recent paper~\citet{LiuSahSawhney21}: namely, if one can find a Markov chain on $\R$ whose increments take values in $\{\pm 1 \}$ and whose stationary distribution is a Gaussian, then it is possible to construct an algorithm, which they call the \emph{Gaussian fixed-point walk}, with the property that each partial sum $\sum_{s=1}^t \sigma_s v_s$ is the difference of two Gaussian vectors. The online Banaszczyk bound would then follow from a union bound. However, \citet{LiuSahSawhney21} exhibit a \textit{parity obstruction} which implies no such Markov chain exists on $\R$, and this leads them to consider instead Markov chains whose increments lie in $\{0, \pm 1\}$ or $\{\pm 1, 2\}$ (and hence the resulting algorithms output partial colorings or improper colorings). We show that an analogous Markov chain \emph{does} exist on $\R^r$ for any $r \geq 2$ whose increments lie in $S^{r-1}$ and whose stationary distribution is a Gaussian with independent and identically distributed coordinates (see Figure \ref{fig:Markov_chain}). Working with higher ranks $r \geq 2$ avoids certain technical complications resulting in the partial coloring version derived by \citet{LiuSahSawhney21} and allows for a simple algorithm and analysis.
\begin{figure}[H]
\centering
\subfigure[]{%
  \includegraphics[width=0.4\textwidth]{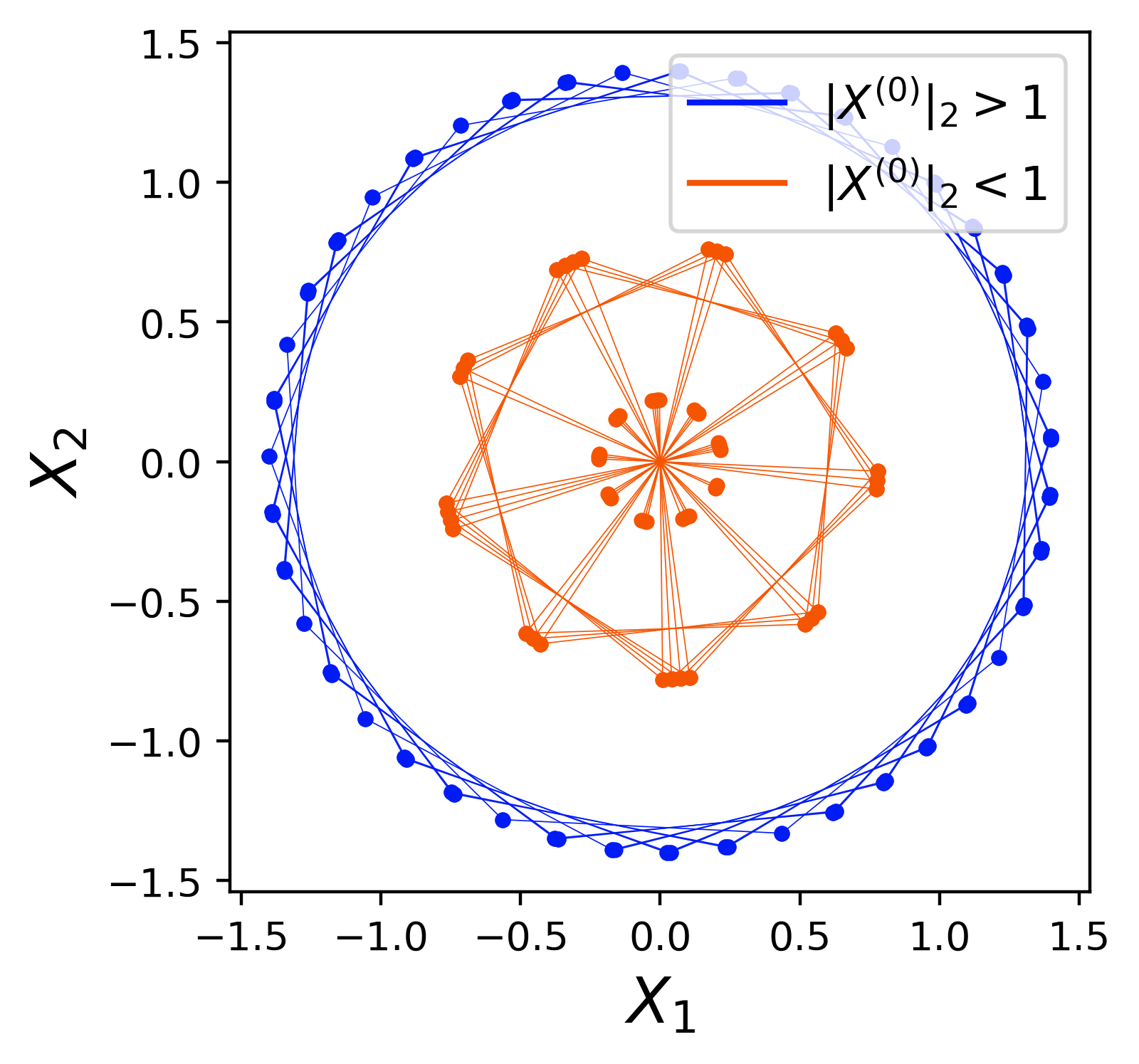}
  \label{fig:subfigure1}}
\quad
\hfill
\subfigure[]{%
  \includegraphics[width=0.4\textwidth]{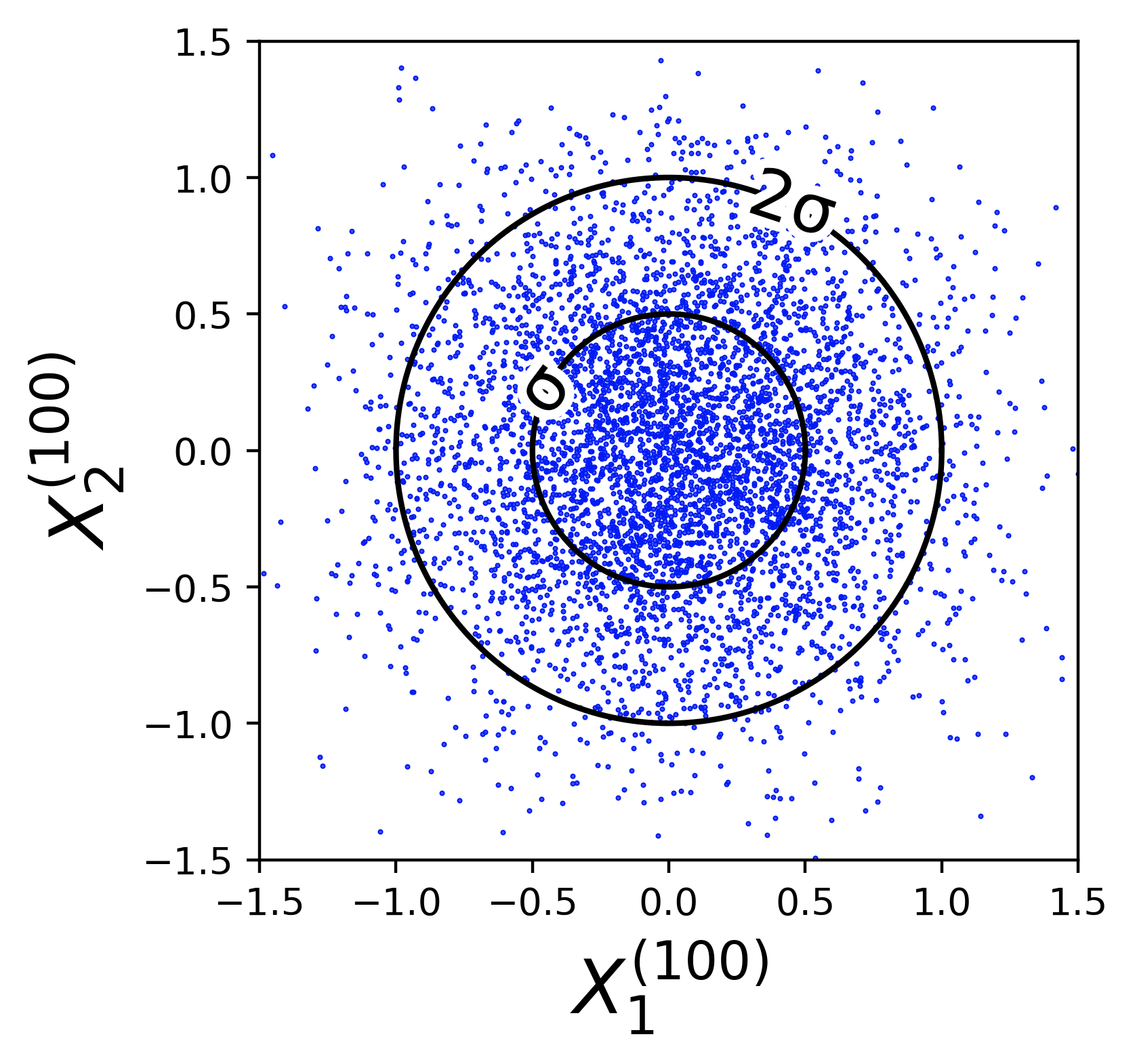}
  \label{fig:subfigure2}}
\caption{${(X\rp{t})}_{t \geq 0}$ is the Markov chain on $\R^{2}$ with unit vector steps and Gaussian stationary distribution. (a) Evolution of ${(X\rp{t})}_{t\ge 0}$ over two runs of the Markov chain with $2000$ steps. The orange and blue trajectories are initialized inside and outside the unit circle, respectively. (b) Scatter plot of $X\rp{100}$ over $5000$ independent runs of the Markov chain started from the stationary distribution $X\rp{0} \sim \NN(0, \sigma^2 I_2)$, where $\sigma=0.5$.
}
\label{fig:Markov_chain}
\end{figure}

We note that the guarantee of Theorem \ref{thm:online_gauss_banaszczyk} can be shown to be tight by using standard Gaussian estimates. In particular, taking $r >2$ in our algorithm provably does not improve the Gaussian discrepancy.

Interestingly, our algorithm also has consequences for the previously unexplored problem of online vector discrepancy, and for this problem, choosing $r > 2$ offers a substantial improvement. Perhaps surprisingly, Theorem \ref{thm:online_vector_komlos} below shows that the Koml\'{o}s conjecture, with sharp constant, is attainable for vector discrepancy even in the oblivious online setting. In contrast, \cite{bansal2020online} exhibit an $\widetilde \Omega(\sqrt{\log T})$ lower bound for online combinatorial discrepancy in the oblivious setting. The next result is an immediate consequence of Theorem \ref{thm:analysis_of_alg} proved in Section \ref{scn:banaszczyk}.

\begin{theorem}[Online Koml\'{o}s bound for vector discrepancy]
\label{thm:online_vector_komlos} 
Let $v_1, \ldots, v_T \in \R^m$ denote vectors of $\ell_2$ norm at most $1$ selected in advance by an adversary. Fix $\delta, \varepsilon > 0$.
    When run with rank $r = \Theta(\varepsilon^{-2}\log(mT/\delta))$, the algorithm outputs $u_1,\dotsc,u_T\in S^{r-1}$ online with 
    \[
    \max_{t\in [T]} \, \max_{i\in [m] }{\Bigl\lVert \sum_{s=1}^t {(v_{s})}_i \, u_s\Bigr\rVert_2} \le 1 + \varepsilon 
    \]
    with probability at least $1-\delta$. The algorithm runs in time $O(m \varepsilon^{-2} \log(mT/\delta))$ per round.
\end{theorem}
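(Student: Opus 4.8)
The plan is to reduce everything to the analysis of the online algorithm built from the Markov chain on $\R^r$ with unit-vector increments and Gaussian stationary distribution (Theorem~\ref{thm:analysis_of_alg}, referenced in the excerpt). That algorithm, run with rank $r$, produces online unit vectors $u_1,\dotsc,u_T \in S^{r-1}$ with the property that each partial sum $\sum_{s=1}^t (v_s)_i\, u_s \in \R^r$ is (up to scaling) the difference of two i.i.d.\ Gaussian vectors in $\R^r$, or more precisely is controlled in $\ell_2$-norm by such a quantity. Concretely I expect Theorem~\ref{thm:analysis_of_alg} to give, for each fixed round $t$ and row $i$, a bound of the form $\bigl\lVert \sum_{s=1}^t (v_s)_i\, u_s\bigr\rVert_2 \le \lVert \xi \rVert_2/\sqrt{r}$ in distribution, where $\xi \sim \normal(0, I_r)$ (the normalization being exactly the one that makes the stationary Gaussian have the right scale so that partial sums of unit steps stay $O(1)$). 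The whole content of the present theorem is then to make this into a \emph{uniform} statement over all $t \in [T]$ and all $i \in [m]$ with the stated probability, and to pick $r$ large enough that the random fluctuation of $\lVert \xi\rVert_2/\sqrt r$ around its mean $\approx 1$ is at most $\varepsilon$.

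The key steps, in order. First, invoke Theorem~\ref{thm:analysis_of_alg} to obtain the structural description of the partial sums produced by the algorithm run at rank $r$. Second, use standard Gaussian concentration: if $\xi \sim \normal(0,I_r)$ then $\lVert \xi \rVert_2$ is $1$-Lipschitz in $\xi$, so $\Pr\bigl(\lVert \xi \rVert_2 \ge \E\lVert\xi\rVert_2 + u\bigr) \le e^{-u^2/2}$, and $\E \lVert \xi\rVert_2 \le \sqrt r$. Hence $\lVert \xi \rVert_2/\sqrt r \le 1 + u/\sqrt r$ except with probability $e^{-u^2/2}$. Third, take a union bound over the $mT$ pairs $(i,t)$: choosing $u = \sqrt{2\log(mT/\delta)}$ makes the total failure probability at most $\delta$, and then $u/\sqrt r \le \varepsilon$ provided $r \ge 2\varepsilon^{-2}\log(mT/\delta)$, i.e.\ $r = \Theta(\varepsilon^{-2}\log(mT/\delta))$ as claimed. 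Fourth, read off the per-round running time: the algorithm at rank $r$ does $O(mr)$ work per round (as in Theorem~\ref{thm:online_gauss_banaszczyk}), which is $O(m\varepsilon^{-2}\log(mT/\delta))$ with this choice of $r$.

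One technical point that needs care rather than cleverness: the consistency condition means the partial sums across different rounds $t$ are \emph{not} independent — they are all functions of a single run of the Markov chain — so the union bound over $t$ is over dependent events. This is fine because the union bound requires no independence; what one must verify is only that for \emph{each} fixed $(i,t)$ the marginal law of $\sum_{s=1}^t (v_s)_i u_s$ is the claimed (scaled Gaussian-difference) distribution, which is precisely what Theorem~\ref{thm:analysis_of_alg} provides. A second minor point is that the increments of the chain live in $S^{r-1}$ while the algorithm must output $u_t$ of unit norm possibly in a growing-dimensional space; but since $r$ is fixed in advance and we may embed $S^{r-1} \hookrightarrow \ell_2$, the output is automatically feasible for online vector discrepancy, and the earlier reduction (Appendix~\ref{scn:formulations equiv}) handles the support condition. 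I expect the main obstacle to be entirely inside Theorem~\ref{thm:analysis_of_alg} — constructing the Markov chain and proving the Gaussian-difference structure of its partial sums — and essentially no obstacle in deriving the present corollary from it; the derivation is the three-line Gaussian-concentration-plus-union-bound argument sketched above.
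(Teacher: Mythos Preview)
Your proposal is correct and essentially identical to the paper's approach. The paper derives the result as an immediate consequence of Theorem~\ref{thm:analysis_of_alg}: that theorem already packages the Gaussian-concentration-plus-union-bound argument you sketch, yielding the high-probability bound $\max_{t,i}\lVert\sum_{s\le t}(v_s)_i u_s\rVert_2 \le \sqrt{2\log(2mT/\delta)/(r-1)} + \sqrt{r/(r-1)}$, and one simply substitutes $r = \Theta(\varepsilon^{-2}\log(mT/\delta))$ to make the right-hand side at most $1+\varepsilon$; the only minor discrepancy is that the partial sum is a difference of \emph{two} Gaussians (from $W_0$ and $W_t$) with variance $\sigma_{\star,r}^2 = 1/(4(r-1))$ rather than a single $\lVert\xi\rVert_2/\sqrt r$, but this only affects constants and your $\Theta$-scaling absorbs it.
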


As a corollary of the above theorem, we obtain a new proof of the vector Koml\'os theorem of~\citet{Niko13}, which states that $\discv(A) \leq 1$ for all matrices $A$ whose columns have $\ell_2$ norm at most $1$.
Since the vector discrepancy of the identity matrix $I_T$ is $1$, our result is essentially sharp.

Although we present our algorithm for the online setting, it yields new algorithmic implications for the offline setting as well. In particular, our runtime is nearly linear in the input size $mT$ for moderate values of the approximation parameter $\eps$. This improves on the time complexity of previous algorithms for vector Koml\'{o}s such as off-the-shelf SDP solvers, which require $\widetilde O({(m \lor T)}^{3.5} \log(1/\varepsilon))$ arithmetic operations~\citep{nesterov2018cvxopt}, and an iterative approach of \citet{DadGarLov19} that na\"{i}vely runs in time $O(T^3\, (m \lor  T))$ due to costly matrix inversion steps.\footnote{We do not take into account fast matrix multiplication for these runtime estimates. We also note that an SDP solver would in general yield the stronger guarantee $\norm{\sum_{s=1}^T {(v_t)}_i \, u_t}_{2}^2 \le {\discv(A)}^2 + \varepsilon$ for $1 \leq i \leq m$.}

Furthermore, setting $\eps = 1$ in Theorem \ref{thm:online_vector_komlos} implies that the Koml\'{o}s conjecture holds for rank-constrained vector discrepancy with $r= O(\log(mT))$. To the best of our knowledge this result was previously unknown even in an offline setting. 

\subsection{Open problems}
\label{scn:open_problems}
Our work leads to some natural open questions which we briefly describe below.

\begin{enumerate}
    
    \item \textbf{Komlós conjecture for Gaussian discrepancy}. Does there exist a constant $K_{\rm G} > 0$ such that $\discG(A) \leq K_{\rm G}$ for any matrix $A$ whose columns have $\ell_2$ norm at most $1$? 
    
    Since Gaussian discrepancy is a relaxation, solving this conjecture is a natural prerequisite for establishing the original Koml\'{o}s conjecture (Conjecture \ref{conj:komlos}). It also leads to a related problem: prove that the Koml\'{o}s conjecture for Gaussian discrepancy implies the Koml\'{o}s conjecture. 
    
    \item \textbf{Rounding Gaussian discrepancy solutions.} Does there exist an efficient rounding scheme to convert Gaussian discrepancy solutions to low-discrepancy signings? For example, in the Koml\'{o}s setting,  does there exist a polynomial-time computable function $f$ from covariance matrices to signings, such that 
    \[
    \norm{A f(\Sigma)}_\infty \lesssim \E_{g \sim \mc{N}(0, \Sigma) }\norm{Ag}_\infty + O(1) \, \, ,
    \]
    for all matrices $A$ with $\max_{j\in [n]}{\norm{A_{:,j}}_2} \leq 1$? 
    
    
    In Appendix~\ref{scn:failure_of_rounding}, we show that two simple rounding schemes (Goemans--Williamson rounding and a PCA-based rounding) are not effective in the setting of Spencer's theorem and the Koml\'{o}s conjecture. 
    
    \item \textbf{Computational tractability of Gaussian discrepancy}. Given a matrix, can we compute its Gaussian discrepancy exactly or approximately in polynomial time? In particular, is it NP-hard to approximate Gaussian discrepancy up to a constant factor? 
    We note that hardness of approximation results are already known for combinatorial discrepancy \citep{ChaNewNik11} and spherical discrepancy \citep{jonesmcpartlon2020spheericaldisc}.  
    
    \item \textbf{Achieving Banaszczyk's bound online}. Finally, we mention that the question of achieving Banaszczyk's bound online for combinatorial discrepancy, as originally posed by \citet{alweiss2020discrepancy}, is still open. 
\end{enumerate}

\subsection{Organization of the paper}

In Section~\ref{scn:main_comparison_pf}, we give the proof of our main comparison result between the relaxations of discrepancy (Theorem~\ref{thm:sdp_spherical_are_relaxations}). In Section~\ref{scn:tightness_examples}, we provide a suite of examples showing that the inequalities given in Section~\ref{sscn:comparison_results} are sharp. Finally, in Section~\ref{scn:banaszczyk}, we present our algorithm and prove that it achieves the online Banaszczyk bound for Gaussian discrepancy.

\section{Comparisons between relaxations of discrepancy}
\label{scn:further_comparisons}

In this section we prove Theorem~\ref{thm:sdp_spherical_are_relaxations} and give a number of illustrative examples that highlight the differences between the various relaxations of discrepancy. These examples show that none of the inequalities in Corollary~\ref{cor:comparison_between_disc} can be reversed in general (at least with constants independent of $m$ and $n$). Moreover, our results in this section give evidence for the following assertion: vector discrepancy and spherical discrepancy each capture distinct aspects of the original discrepancy problem, namely SDP solutions are ``aligned with the coordinate axes'', whereas spherical discrepancy solutions are ``low rank''. Gaussian discrepancy appears to capture both of these aspects simultaneously.

This assertion can already be partially justified by observing that the feasible covariance matrices $\Sigma$ for vector discrepancy satisfy $\diag \Sigma = \mb 1_n$, i.e.\ the variance along each of the coordinate axes is fixed to be $1$. On the other hand, from the proof of Theorem~\ref{thm:sdp_spherical_are_relaxations} below, we see that a spherical discrepancy solution $x$ gives rise to the covariance matrix $\Sigma \deq xx^\T$, which is indeed low rank but is not guaranteed to have unit variance along each coordinate axis.

\subsection{Proof of main comparison result}\label{scn:main_comparison_pf}

We repeatedly use the following useful facts about log-concave distributions. Note that any Gaussian random variable has a log-concave distribution.

\begin{lemma}[{{\citet[Propositions A.5-A.6]{alonsogutierrezbastero2015kls}}}]\label{lem:log_concave}
Let $X$ be a random vector in $\R^d$ with a log-concave distribution, and let $\norm\cdot : \R^d\to\R_+$ be any seminorm.
\begin{enumerate}
    \item For any $p > 1$, it holds that ${\E[\norm X^p]}^{1/p} \lesssim p \E\norm X$, where the implied constant is universal.
    \item For any $t > 0$, it holds that $\Pr\{\norm X \le t \E \norm X\} \lesssim t$, where the implied constant is universal.
\end{enumerate}
\end{lemma}

\begin{proof}[Proof of Theorem~\ref{thm:sdp_spherical_are_relaxations}] \underline{Vector discrepancy}: Given $\Sigma \in \mc{E}_n$, let $g^\Sigma$ denote a centered Gaussian vector with covariance matrix $\Sigma$, and write $UU^\T=\Sigma$ with $u_j$ denoting the rows of $U$. Observe that
\begin{align}\label{eq:sdp_useful}
    \max_{i\in [m]}{\Bigl\lVert \sum_{j=1}^n A_{i,j} u_j\Bigr\rVert_2}
    &= \sqrt{\max_{i\in [m]}{\langle A_{i,:}, \Sigma A_{i,:} \rangle}}\,.
\end{align}
    Therefore 
    \begin{align*}
        \discv(A)
        = \min_{\Sigma \in \mc{E}_n} \sqrt{\max_{i\in [m]}{\langle A_{i,:}, \Sigma A_{i,:} \rangle}}
        &= \min_{\Sigma \in \mc{E}_n} \sqrt{\max_{i\in [m]} \E\Bigl[\Bigl\lvert \sum_{j=1}^n A_{i,j} g^\Sigma_j \Bigr\rvert^2\Bigr]} \\
        &\overset{(\star)}{\asymp} \min_{\Sigma \in \mc{E}_n} \max_{i\in [m]} \E \Bigl\lvert \sum_{j=1}^n A_{i,j} g^\Sigma_j \Bigr\rvert
        = \min_{\Sigma \in \mc{E}_n} \max_{i\in [m]} \E\abs{{(Ag^\Sigma)}_i}\,,
    \end{align*}
    where $(\star)$ uses Lemma~\ref{lem:log_concave} and Cauchy--Schwarz.
    
    \underline{Spherical discrepancy}: Let us temporarily denote
    \begin{align*}
        \on{\widetilde{\msf{disc}_{\msf s}}}(A)
        &\deq \min\{\E\norm{Ag}_\infty : g \sim \normal(0, \Sigma)\,,\; \Sigma \in \mbb S_+^n\,,\; \tr \Sigma = n\}\,.
    \end{align*}
    Let $g$ be a Gaussian that achieves the minimum in the definition of $\on{\widetilde{\msf{disc}_{\msf s}}}(A)$.
    Then, by Markov's inequality,
    \begin{align}\label{eq:sph_discG_2}
        \Pr\{\norm{Ag}_\infty \ge 3 \on{\widetilde{\msf{disc}_{\msf s}}}(A)\}
        &\le \frac{1}{3}\,.
    \end{align}
    From Lemma~\ref{lem:log_concave}, we further have
    \begin{align}\label{eq:sph_discG_1}
        \Pr\{\norm g_2 \le c \E\norm g_2\} \le \frac{1}{3}\, \qquad\text{and}\qquad \E\norm g_2 \gtrsim \sqrt{\E\norm g_2^2}
    \end{align}
    for some constant $c>0$. From~\eqref{eq:sph_discG_1} and~\eqref{eq:sph_discG_2}, there exists a realization $\msf g \in \R^n$ of $g$ with
    \begin{equation*}
        \norm{A \msf{g}}_\infty
        \le 3 \on{\widetilde{\msf{disc}_{\msf s}}}(A)\,, \qquad\text{and}\qquad \norm{\msf{g}}_2 \geq c\E\norm{g}_2 \gtrsim \sqrt{\E\norm{g}_2^2} = \sqrt n\,.
    \end{equation*}
    It follows that
    \begin{align*}
        \discs(A)
        &\le \bigl\lVert A \, \frac{\sqrt n \, \msf{g}}{\norm{\msf{g}}_2} \bigr\rVert_\infty
        = \frac{\sqrt n}{\norm{\msf{g}}_2} \, \norm{A\msf{g}}_\infty
        \lesssim \on{\widetilde{\msf{disc}_{\msf s}}}(A)\,.
    \end{align*}
    Conversely, let $x$ be an optimal solution for $\discs(A)$ and let $\xi$ be a standard Gaussian variable on $\R$, so $\xi x \sim \normal(0, xx^\T)$ and $\tr(xx^\T) = \norm x_2^2 = n$.
    Then,
    \begin{align*}
        \on{\widetilde{\msf{disc}_{\msf s}}}(A)
        \le \E\norm{A(\xi x)}_\infty
        &= \E\abs \xi \, \norm{Ax}_\infty
        = \sqrt{\frac{2}{\pi}} \discs(A)\,,
    \end{align*}
    which establishes the converse bound. 
\end{proof}

\subsection{Tightness of the comparison results}\label{scn:tightness_examples}

Our first example demonstrates the sharpness of Corollary \ref{cor:spherical_and_sdp} with respect to both $m$ and $n$.

\begin{example}
    For this example we first consider an infinitely tall matrix $A$.
    Let the rows of $A$ consist of all unit vectors in $\R^n$.
    By taking the vector discrepancy solution $\Sigma = I_n$ and using~\eqref{eq:sdp_useful}, we see that
    \begin{align*}
        \discv(A)
        &\le \sqrt{\max_{a \in S^{n-1}}{\langle a, \Sigma a \rangle}}
        = 1\,.
    \end{align*}
    (In fact it is not hard to see that this is an equality.)
    However, for any $x \in \sqrt n \, S^{n-1}$ we have
    \begin{align*}
        \max_{a \in S^{n-1}}{\abs{\langle a, x \rangle}}
        &= \norm x_2
        = \sqrt n\,,
    \end{align*}
    which shows that $\discs(A) = \sqrt n$. Thus, the spherical discrepancy can be much larger than vector discrepancy\@.
    
    Although this example used an infinitely tall matrix $A$, we can modify it by taking the rows of $A$ to consist of a $1/2$-net of $S^{n-1}$.
    Then, the number of rows of $A$ can be taken to be $m = \exp(\Theta(n))$, and a standard argument involving nets~\citep[see the proof of][Lemma 4.4.1]{vershynin2018highdimprob} shows that we still have $\discv(A) \le 1$ and $\discs(A) \gtrsim \sqrt n$. In particular, this shows that the bound
    \begin{align*}
        \discs(A)
        &= O(\sqrt{n \wedge \log m}) \discv(A)
    \end{align*}
    obtained in Corollary~\ref{cor:spherical_and_sdp} is sharp with respect to both $m$ and $n$.
\end{example}

The next example shows that Corollary~\ref{cor:spherical_and_sdp} cannot be reversed, even with a constant depending on $m$ and $n$, so that vector discrepancy cannot in general be controlled by spherical discrepancy.

\begin{example}
\label{ex:zero_sdisc_nonzero_vdisc}
    Let $v \in S^{n-1}$ be a unit vector and suppose that the rows of $A$ consist of all vectors in
    \begin{align*}
        \mc A
        &\deq S^{n-1} \cap v^\perp\,,
    \end{align*}
    the set of unit vectors which are orthogonal to $v$ (as in the preceding example, this example can also be modified to a matrix with finitely many rows).
    Then, $\discs(A) \le \sqrt n \max_{a \in \mc A}{\abs{\langle a, v\rangle}} = 0$.
    On the other hand, we claim that $\discv(A) > 0$ for most choices of $v$.
    Suppose to the contrary that there exist unit vectors $u_1,\dotsc,u_n \in \R^n$ witnessing the fact that $\discv(A) = 0$.
    Then, for each $k\in [n]$,
    \begin{align*}
        0
        &= \max_{a \in \mc A}{\Bigl\lVert \sum_{j=1}^n a_j u_j \Bigr\rVert_2}
        \ge \max_{a \in \mc A}{\Bigl\lvert \sum_{j=1}^n a_j u_j[k] \Bigr\rvert}
        = \max_{a \in \mc A}{\abs{\langle a, u_\bullet[k] \rangle}}
        = \norm{\proj_{v^\perp} u_\bullet[k]}_2\,,
    \end{align*}
    where $u_\bullet[k] \in \R^n$ denotes the vector ${(u_j[k])}_{j=1}^n$.
    The inequality implies that $u_\bullet[k]$ is a multiple of $v$, i.e., there exists a scalar $c_k \in \R$ such that $u_j[k] = c_k v_j$ for all $j \in [n]$.
    Since $u_j$ is a unit vector,
    \begin{align*}
        1
        &= \norm{u_j}_2^2
        = v_j^2 \sum_{k=1}^n c_k^2\,, \qquad \text{for all}~j\in [n]\,.
    \end{align*}
    In order for this to hold, each coordinate of $v$ must have the same magnitude, i.e.\ $v$ must be a signing (scaled by $n^{-1/2}$).
    Hence, for most directions $v$ we have $\discs(A) = 0$ but $\discv(A) > 0$.
    
    This shows in particular that there does not exist $C > 0$ such that $\discv(A) \le C \discs(A)$ for all $m\times n$ matrices $A$, even if the constant $C$ is allowed to depend on $m$ and $n$.
\end{example}

The two preceding examples show that the statements $\discv \le C_1 \discs$ and $\discs \le C_2\discv$ do \emph{not} hold with universal constants $C_1, C_2 > 0$. In particular, since $\discv \vee \discs \lesssim \discG$ by Corollary~\ref{cor:comparison_between_disc}, it also implies that the statements $\discG \le C_1' \discv$ and $\discG \le C_2' \discs$ also do \emph{not} hold with universal constants $C_1'$ and $C_2'$.

We give another example to show that in general, Gaussian discrepancy can indeed be smaller than combinatorial discrepancy.

\begin{example}\label{ex:discG_can_be_smaller}
    Consider the case when $m=1$, so that $A$ consists of a single row, and assume that the entries of $A$ are i.i.d.\ standard Gaussians.
    Then, the discrepancy of $A = (a_1,\dotsc,a_n)$ is usually referred to as the \emph{number balancing} problem.
    
    Let us suppose that $n$ is a multiple of $3$, and divide the $n$ coordinates into $3$ groups consisting of $n/3$ coordinates each. Due to concentration of measure, the sums \[ \ell_1 \deq \sum_{i=1}^{n/3} {\abs{a_i}}\,, \qquad \ell_2 \deq \sum_{i=n/3+1}^{2n/3} {\abs{a_i}}\,, \qquad \text{and}\qquad \ell_3 \deq \sum_{i=2n/3+1}^{n} {\abs{a_i}} \] will concentrate around their expected value $\sqrt{2/\pi} \, n/3$. In particular, with high probability we will have $2\max_{i=1,2,3} \ell_i \le \sum_{i=1}^3 \ell_i$, which says that $\ell_1$, $\ell_2$, and $\ell_3$ form the side lengths of a triangle in $\R^2$. If we let $u_1, u_2, u_3 \in \R^2$ denote unit vectors corresponding to the sides of these triangles, then
    \begin{align*}
        0
        &= \sum_{i=1}^{n/3} {\abs{a_i} \, u_1} + \sum_{i=n/3+1}^{2n/3} {\abs{a_i} \, u_2} + \sum_{i=2n/3+1}^{n} {\abs{a_i} \, u_3} \\
        &= \sum_{i=1}^{n/3} {a_i \, (\sgn a_i) \,u_1} + \sum_{i=n/3+1}^{2n/3} {a_i \, (\sgn a_i) \, u_2} + \sum_{i=2n/3+1}^{n} {a_i \, (\sgn a_i) \, u_3}\,.
    \end{align*}
    This shows that with high probability, $\discv_2(A) = 0$ and hence $\discG(A) = 0$ by Proposition~\ref{prop:discG_low_rank}.
    
    On the other hand, it is well-known that $\disc(A) = \Theta(\sqrt n \, 2^{-n}) > 0$ \citep[see, \textit{e.g.},][]{KarKarLue86,Cos09,TurMekRig20}. In particular, there does not exist a constant $C > 0$ (even if the constant is allowed to depend on $m$ and $n$) such that $\disc(A) \le C \discG(A)$.
\end{example}

In summary, if we require the constants to be universal, then in general none of the inequalities in Corollary~\ref{cor:comparison_between_disc} can be reversed, and furthermore vector discrepancy and spherical discrepancy are incomparable.

In this section, we have argued that vector discrepancy and spherical discrepancy capture distinct aspects of the original discrepancy problem, and can therefore be viewed as complementary.
It is then natural to ask, whether a \emph{combination} of vector discrepancy and spherical discrepancy can control Gaussian discrepancy.
This is also motivated by the results of~\citet{Niko13} and \citet{jonesmcpartlon2020spheericaldisc} that prove the Koml\'os conjecture for vector discrepancy and spherical discrepancy respectively; hence, a control on the Gaussian discrepancy in terms of these two notions would imply the Koml\'os conjecture for Gaussian discrepancy as well.
We however resolve this question negatively via the following example.

\begin{example}
    Let the rows of $A$ consist of all unit vectors in $\R^n$ orthogonal to $e_1$.
    Then, similarly to the previous examples, we have $\discv(A) \le 1$ and $\discs(A) = 0$.
    On the other hand, for any feasible Gaussian coupling $g$,
    \begin{align*}
        \E\norm{Ag}_\infty
        = \E\max_{a \in S^{n-1} \cap e_1^\perp}{\abs{\langle a, g \rangle}}
        = \E\norm{\proj_{e_1^\perp} g}_2
        &\overset{(\star)}{\gtrsim} \sqrt{\E[\norm{(0, g_2,\dotsc,g_n)}_2^2]}
        = \sqrt{n-1}\,,
    \end{align*}
    where the inequality $(\star)$ uses Lemma~\ref{lem:log_concave}.
    Hence, $\discG(A) \gtrsim \sqrt{n-1}$.
    This in fact shows that there is no non-trivial function $f : \R_+^2 \to \R_+$ which is independent of both $m$ and $n$, such that
    \begin{align*}
        \discG(A) \le f\bigl( \discv(A), \discs(A)\bigr)
    \end{align*}
    for all matrices $A$.
\end{example}

\section{Gaussian fixed-point walk in higher dimensions}\label{scn:banaszczyk}

\subsection{High-level overview}

We begin by summarizing the Gaussian fixed-point walk as introduced in~\citet{LiuSahSawhney21}. Recall the setting of online discrepancy minimization with oblivious adversary: the adversary chooses in advance vectors $v_1,\dotsc,v_T \in \R^m$ and at each time step $t \in [T]$, the vector $v_t$ is revealed to the algorithm, upon which it must then choose a sign $\sigma_t \in \{\pm 1\}$.
The loss incurred by the algorithm is the maximum discrepancy $\max_{t\in [T]}{\norm{\sum_{s=1}^t \sigma_s v_s}_\infty}$.

Let $w_t \deq w_0 + \sum_{s=1}^t \sigma_s v_s$ denote the partial sum vector with initialization $w_0 \sim \NN(0, I_m)$. The idea behind the Gaussian fixed-point walk is to choose the signs $\sigma_t$ to ensure that $w_t \sim \normal(0, I_m)$ for all $t\in [T]$ (observe that if this holds, then the loss $\max_{t \in [T]}{\norm{w_t - w_0}_\infty}$ incurred by the algorithm is easily controlled via a union bound).
Towards this end, we may assume as an inductive hypothesis that $w_{t-1} \sim \normal(0, I_m)$.
Then, we may write
\begin{align*}
    w_t
    &= {\underbrace{\proj_{v_t^\perp} w_{t-1}}_{\sim \NN(0, I_m-v_tv_t^\perp)}} + {[{\underbrace{\proj_{v_t} w_{t-1}}_{\sim \NN(0, v_tv_t^\perp)}} + \sigma_t v_t]}\,.
\end{align*}
If we choose $\sigma_t$ independently of $\proj_{v_t^\perp} w_{t-1}$, then the two terms in the above decomposition are independent. Moreover, if we could choose $\sigma_t$ such that $\proj_{v_t} w_{t-1} + \sigma_t v_t \sim \normal(0, v_tv_t^\perp)$, then $w_t \sim \NN(0, I_m)$. These considerations lead to the problem of finding a one-dimensional Markov chain (on $\spn v_t$) whose increments belong to $\{\pm v_t\}$, and whose stationary distribution is $\NN(0, v_tv_t^\perp)$. However, a parity argument given in~\citet{LiuSahSawhney21} shows that such a Markov chain does not exist, even if we allow for changing the variance of the Gaussian from $1$.

To deal with this issue,~\citet{LiuSahSawhney21} show that such Markov chains exist if we allow the increments to take values in $\{0, \pm v_t\}$ or $\{\pm v_t, 2v_t\}$, which leads to the algorithm outputting partial or improper colorings. If the variance of the Gaussian is set to be sufficiently large, then they can further argue that the algorithm outputs an actual signing with high probability, but the large variance of the Gaussian prevents them from establishing an online Banaszczyk result.

Our contribution in this section lies in adapting the idea above to online Gaussian discrepancy, by constructing a Markov chain in $\R^r,\,r \geq 2$ whose stationary distribution is a Gaussian, and whose increments lie in on the unit sphere; this is given in Section~\ref{scn:construction_mc}. The algorithm and analysis are then given in Section~\ref{scn:alg_analysis}.

\subsection{Construction of the Markov chain}\label{scn:construction_mc}

Our goal in this section is to construct a Markov chain on $\R^r$ for $r \geq 2$ whose increments are unit vectors and whose stationary distribution is $\NN(0, \sigma^2 I_r)$ for some $\sigma^2$. In this and subsequent sections, $\sigma$ is used to denote the standard deviation (and in particular should not be confused with a signing).

The density of the $\chi$-distribution with $r$ degrees of freedom is
\[
\chi_{r}(s) = \frac{1}{2^{r/2-1} \, \Gamma(r/2)} \, s^{r - 1} e^{-\frac{s^2}{2}} \one\{s\ge 0\}\,.
\]
Hence, the density of $\norm{g}_2$ where $g \sim \NN(0, \sigma^2 I_r)$ is
\[
\chi_{r, \sigma^2}(s) 
= \frac{1}{2^{r/2-1} \, \Gamma(r/2)} \, \frac{s^{r - 1}}{\sigma^r} \, e^{-\frac{s^2}{2\sigma^2}} \one\{s\ge 0\}\,.
\]
Let $x \in \R^r$, and define the following sets:
\begin{align*}
    S_x
    &\deq \{y \in \R^r : \norm{x-y}_2 = 1~\text{and}~\norm y_2 = \norm x_2\}\,, \\
    S_x'
    &\deq \{y \in \R^r : \norm{x-y}_2 = 1~\text{and}~\norm y_2 = 1-\norm x_2\}\,.
\end{align*}
Since $\norm y_2^2 = \norm x_2^2 + \norm{x-y}_2^2 + 2 \, \langle y-x, x \rangle$, if $\norm{y-x}_2 = 1$ then from the Cauchy-Schwarz inequality, we obtain
\begin{align*}
    {(\norm x_2 - 1)}^2 \le \norm y_2^2 \le {(\norm x_2 + 1)}^2\,.
\end{align*}
Conversely, if $s \ge 0$ satisfies ${(\norm x_2 - 1)}^2 \le s^2 \le {(\norm x_2 + 1)}^2$, then there exists $y \in \R^r$ such that $\norm{y-x}_2 =1$ and $\norm y_2 = s$. From this, we deduce that
\begin{align*}
    \bigl[S_x \ne \varnothing\quad\text{if}\quad\norm x_2 \ge \frac{1}{2}\bigr] \qquad\text{and}\qquad \bigl[S_x' \ne \varnothing\quad\text{if}\quad\norm x_2 \le 1\bigr]\,.
\end{align*}
Further note that if $0 < \norm{x}_2 \leq 1$, then $S'_x$ consists of a single element given by $y = (\frac{\norm{x}_2-1}{\norm{x}_2}) x$, while $S'_0=S^{r-1}$ is the unit sphere.

The Markov chain transitions from $x\in\R^r$ with norm $\norm x_2 = s$ to the next point according to the following rules. 
\begin{enumerate}
    \item If $0 \leq s < \frac{1}{2}$, move to a point chosen uniformly at random in $S_x'$.
    \item If $\frac{1}{2} \leq s < 1$:
    \begin{enumerate}
        \item with probability $\frac{\chi_{r,\sigma^2}(1-s)}{\chi_{r,\sigma^2}(s)}$, move to a point chosen uniformly at random in $S_x'$.
        \item with probability $1-\frac{\chi_{r,\sigma^2}(1-s)}{\chi_{r,\sigma^2}(s)}$, move to a point chosen uniformly at random in $S_x$.
    \end{enumerate}    
    \item If $s \geq 1$, then move to a point chosen uniformly at random in $S_x$.
\end{enumerate}
Symbolically, this defines a transition kernel $q(x, \cdot)$ described explicitly as follows. Let $\mathsf{P}_x$ (respectively $\msf P_x'$) denote the uniform distribution on $S_x$ (respectively $S_x'$).
Then:

\begin{equation}
    \label{eqn:trans_kernel}
    q(x, \cdot) = \begin{cases}
    \msf P_x'\,, & \text{if} \, \, 0 \leq s < \frac{1}{2}  \\[0.3em]
    \frac{\chi_{r,\sigma^2}(1-s)}{\chi_{r,\sigma^2}(s)} \, \msf P_x' + (1 - \frac{\chi_{r,\sigma^2}(1-s)}{\chi_{r,\sigma^2}(s)})  \, \msf P_x\,, & \text{if} \, \, \frac{1}{2} \leq s < 1  \\
    \msf P_x\,, & \text{if} \, \, s \geq 1.
    \end{cases}
\end{equation}

This Markov chain is only well-defined if $\chi_{r,\sigma^2}(s) \geq \chi_{r,\sigma^2}(1-s)$ for all $0 \le s < 1/2$.
The following lemma shows that this holds if and only if $\sigma \ge \frac{1}{2\sqrt{r-1}}$.

\begin{lemma}\label{lem:sigma}
    The inequality $\chi_{r,\sigma^2}(s) \geq \chi_{r,\sigma^2}(1-s)$ holds for all $s \in [0,\frac{1}{2}]$ if and only if $\sigma \ge \frac{1}{2 \sqrt{r-1}}$.
\end{lemma}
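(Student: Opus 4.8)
The statement is a clean one-variable inequality, so the plan is to reduce it to an elementary calculus fact. Writing $c_r := \frac{1}{2^{r/2-1}\Gamma(r/2)}$ for the normalizing constant (which cancels), the inequality $\chi_{r,\sigma^2}(s) \ge \chi_{r,\sigma^2}(1-s)$ for $s \in [0,\tfrac12]$ is equivalent, after dividing by $\chi_{r,\sigma^2}(1-s) > 0$ (note $1-s \in [\tfrac12,1]$ so this is positive) and taking logarithms, to
\[
(r-1)\,\log\frac{s}{1-s} - \frac{s^2 - (1-s)^2}{2\sigma^2} \;\ge\; 0 \qquad \text{for all } s \in [0,\tfrac12].
\]
Since $s^2 - (1-s)^2 = 2s - 1$, this is
\[
(r-1)\,\log\frac{s}{1-s} \;\ge\; \frac{2s-1}{2\sigma^2}, \qquad s \in (0,\tfrac12],
\]
with the $s = 0$ case holding trivially (left side $= -\infty$ is false — so actually one must be slightly careful: at $s=0$, $\chi_{r,\sigma^2}(0) = 0$ while $\chi_{r,\sigma^2}(1) > 0$ when $r \ge 2$, so strictly speaking the inequality \emph{fails} at $s=0$; I would therefore note that for $r \ge 2$ the transition rule at $s=0$ is governed by case 1 of~\eqref{eqn:trans_kernel} and does not require this ratio to be $\le 1$, and restrict attention to $s \in (0, \tfrac12]$ where $2s - 1 < 0$; alternatively interpret the claim on the open interval).

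**Key steps.** First I would substitute $t = 1 - 2s \in [0,1)$, so $s = \frac{1-t}{2}$, $1-s = \frac{1+t}{2}$, and $2s - 1 = -t$; the inequality becomes
\[
(r-1)\,\log\frac{1-t}{1+t} \;\ge\; \frac{-t}{2\sigma^2}, \qquad \text{i.e.}\qquad (r-1)\,\log\frac{1+t}{1-t} \;\le\; \frac{t}{2\sigma^2}, \qquad t \in [0,1).
\]
Now the classical expansion $\log\frac{1+t}{1-t} = 2\sum_{k\ge 0} \frac{t^{2k+1}}{2k+1}$ shows that $\log\frac{1+t}{1-t} \ge 2t$ for all $t \in [0,1)$, with equality only at $t = 0$; moreover the ratio $\frac{1}{t}\log\frac{1+t}{1-t}$ is increasing on $(0,1)$ and tends to $2$ as $t \to 0^+$. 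Hence $\sup_{t \in (0,1)} \frac{1}{t}\log\frac{1+t}{1-t} = \lim_{t\to 1^-}\frac{1}{t}\log\frac{1+t}{1-t} = +\infty$, which looks bad — but the \emph{right} quantity to bound is $\frac{(r-1)}{t}\log\frac{1+t}{1-t}$ against the \emph{constant} $\frac{1}{2\sigma^2}$, and this clearly cannot hold near $t = 1$. This signals that I have the direction of the iff backwards relative to my substitution; the correct reading is that we need the inequality only where it is \emph{required} for the chain to be well-defined, which is precisely $s \in [0,\tfrac12)$, i.e. $t \in (0,1]$ — and one must recheck which of $\chi_{r,\sigma^2}(s)$, $\chi_{r,\sigma^2}(1-s)$ is larger. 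Re-examining: for $s$ small, $s^{r-1}$ is small and $(1-s)^{r-1}$ is near $1$, so $\chi_{r,\sigma^2}(s) < \chi_{r,\sigma^2}(1-s)$ unless $\sigma$ is large enough that the Gaussian factor $e^{-s^2/2\sigma^2}$ vs $e^{-(1-s)^2/2\sigma^2}$ compensates. So the needed inequality is genuinely a \emph{lower} bound on $\sigma$, consistent with the statement, and the sharp constraint comes from the behavior as $s \to \tfrac12^-$ (equivalently $t \to 0^+$), where both the polynomial ratio and the Gaussian ratio tend to $1$ and one compares first-order terms.

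**The crux.** So the real content is a local analysis at $s = \tfrac12$. Define $\phi(s) := (r-1)\log s - \frac{s^2}{2\sigma^2}$, so that $\chi_{r,\sigma^2}(s) \ge \chi_{r,\sigma^2}(1-s) \iff \phi(s) \ge \phi(1-s)$, i.e. $\psi(s) := \phi(s) - \phi(1-s) \ge 0$ on $[0,\tfrac12]$. Since $\psi(\tfrac12) = 0$, a necessary condition is $\psi'(\tfrac12) \le 0$ (so that $\psi$ is nonincreasing through $\tfrac12$, hence $\ge 0$ just to the left). Compute $\psi'(s) = \phi'(s) + \phi'(1-s) = (r-1)(\tfrac1s + \tfrac1{1-s}) - \tfrac{1}{\sigma^2}(s - (1-s)) \cdot$ — wait, $\frac{d}{ds}[-\phi(1-s)] = \phi'(1-s)$, and $\phi'(u) = \frac{r-1}{u} - \frac{u}{\sigma^2}$, so $\psi'(s) = \frac{r-1}{s} + \frac{r-1}{1-s} - \frac{s}{\sigma^2} + \frac{1-s}{\sigma^2}= (r-1)\bigl(\tfrac1s + \tfrac1{1-s}\bigr) + \tfrac{1-2s}{\sigma^2}$. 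At $s = \tfrac12$: $\psi'(\tfrac12) = (r-1)\cdot 4 + 0 = 4(r-1) > 0$, which would say $\psi$ is \emph{increasing} at $\tfrac12$, contradicting $\psi \ge 0$ on the left — unless I have a sign error. The resolution (and the step I expect to actually take) is: $\psi(s) := \log\chi_{r,\sigma^2}(s) - \log\chi_{r,\sigma^2}(1-s)$ with the genuinely correct signs; reduce to showing a single function $h(s) := \psi(s)$ satisfies $h \ge 0$ on $[0,\tfrac12]$; observe $h(\tfrac12)=0$; show $h$ is monotone or convex/concave on the interval via one derivative computation so that the sign is determined by the boundary value and the derivative at $\tfrac12$; and finally extract that the condition $h'(\tfrac12^-)$ having the right sign is \emph{exactly} $\sigma^2(r-1) \ge \tfrac14$, i.e. $\sigma \ge \frac{1}{2\sqrt{r-1}}$, with a separate easy check that this local condition is also \emph{sufficient} on all of $[0,\tfrac12]$ (e.g. because $h$ has no interior critical point in $(0,\tfrac12)$, which follows from $h'(s) = (r-1)(\tfrac1s+\tfrac1{1-s}) - \tfrac{2s-1}{\sigma^2}$ having constant sign once $(r-1)\cdot\tfrac{1}{s(1-s)} \ge \tfrac{1}{\sigma^2}$, and $s(1-s) \le \tfrac14$ on $[0,\tfrac12]$ gives $\tfrac{1}{s(1-s)} \ge 4$, closing the loop). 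The main obstacle is purely bookkeeping: getting the direction of every inequality and every sign of a derivative right, and correctly handling the degenerate endpoint $s = 0$ (where $\chi_{r,\sigma^2}(0) = 0$ for $r \ge 2$) by noting it lies in case 1 of the kernel and imposes no constraint. Once the signs are pinned down, the "if and only if" drops out of the boundary computation $h'(\tfrac12)$ with no further work.
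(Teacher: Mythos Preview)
Your core strategy is correct and matches the paper's: reduce to showing that
\[
h(s) := (r-1)\log\frac{s}{1-s} - \frac{2s-1}{2\sigma^2}
\]
has a fixed sign on $(0,\tfrac12]$, using $h(\tfrac12)=0$ together with monotonicity of $h$, which follows from $h'(s) = \frac{r-1}{s(1-s)} - \frac{1}{\sigma^2} \ge 0$ (since $s(1-s)\le\tfrac14$) precisely when $\sigma^2 \ge \frac{1}{4(r-1)}$. The paper handles necessity by observing that the mode of $\chi_{r,\sigma^2}$ sits at $\sigma\sqrt{r-1}$, which lies in $(0,\tfrac12)$ exactly when $\sigma < \frac{1}{2\sqrt{r-1}}$, and then the unique maximizer beats its reflection; your derivative-at-$\tfrac12$ criterion is an equivalent way to say the same thing.

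However, your confusion about directions and the endpoint $s=0$ is not a bookkeeping glitch on your part: the inequality in the lemma statement is a typo. What the paper's own proof actually establishes (and what the Markov chain needs, namely that $\chi_{r,\sigma^2}(1-s)/\chi_{r,\sigma^2}(s)\le 1$ for $s\in[\tfrac12,1)$ in case 2 of the kernel) is $\chi_{r,\sigma^2}(s)\le\chi_{r,\sigma^2}(1-s)$ on $[0,\tfrac12]$, not $\ge$. Your observation that $\chi_{r,\sigma^2}(0)=0<\chi_{r,\sigma^2}(1)$ for $r\ge 2$ already shows the stated $\ge$ direction cannot hold for any $\sigma$; you correctly smelled that something was off, but attributed it to your own substitution rather than to the statement.

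You also have a genuine computational slip: you write $\psi'(s)=(r-1)\bigl(\tfrac1s+\tfrac1{1-s}\bigr)+\tfrac{1-2s}{\sigma^2}$ and later $h'(s)=(r-1)\bigl(\tfrac1s+\tfrac1{1-s}\bigr)-\tfrac{2s-1}{\sigma^2}$, but differentiating $-\tfrac{2s-1}{2\sigma^2}$ gives the constant $-\tfrac{1}{\sigma^2}$, so the correct derivative is $\frac{r-1}{s(1-s)}-\frac{1}{\sigma^2}$. This is exactly why your $\psi'(\tfrac12)=4(r-1)$ carried no $\sigma$-dependence and you could not extract the threshold from it; with the correct formula, $\psi'(\tfrac12)=4(r-1)-\tfrac{1}{\sigma^2}$ and the condition $\sigma\ge\frac{1}{2\sqrt{r-1}}$ falls out immediately. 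Once you flip the inequality direction and fix this derivative, your argument is essentially the paper's.
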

\begin{proof}
    Recall that $\chi_{r,\sigma^2}(s) \propto s^{r-1} \exp(-s^2/(2\sigma^2))$, so that the mode is at $\sigma \sqrt{r-1}$ (this can be derived via elementary calculus, since the density is log-concave).
    In particular, if $\sigma < 1/(2 \sqrt{r-1})$, then the mode lies in $(0, 1/2)$, so the desired property cannot hold.

    Conversely, suppose $\sigma \ge 1/(2 \sqrt{r-1})$.
    We want to show, for all $s \in [0,1/2]$,
    \begin{align*}
        1
        &\overset{!}{\ge} \frac{\chi_{r,\sigma^2}(s)}{\chi_{r,\sigma^2}(1-s)}
        = {\bigl( \frac{s}{1-s} \bigr)}^{r-1} \exp\bigl( - \frac{s}{\sigma^2} + \frac{1}{2\sigma^2} \bigr)\,.
    \end{align*}
    Taking logarithms and rearranging, we want
    \begin{align*}
        \psi(s)
        &\deq (r-1) \log \frac{s}{1-s} - \frac{s}{\sigma^2}
        \overset{!}{\le} - \frac{1}{2\sigma^2}\,.
    \end{align*}
    Differentiating, we obtain
    \begin{align*}
        \psi'(s)
        &\ge (r-1) \, \bigl( \frac{1}{s \, (1-s)} - 4 \bigr)
        \ge 0\,,
    \end{align*}
    so $\psi$ is increasing.
    Hence,
    \begin{align*}
        \sup_{[0, 1/2]}\psi
        &= \psi\bigl( \frac{1}{2} \bigr)
        = - \frac{1}{2\sigma^2}
    \end{align*}
    as desired.
\end{proof}

With this result in hand, we verify that the Markov chain construction has the correct stationary distribution.

\begin{proposition}
\label{prop:stationary}
The Markov transition kernel $q$ on $\R^r$ defines a Markov chain with unit length increments and stationary distribution $\NN(0, \sigma^2 I_r)$ provided that $\sigma \ge \frac{1}{2\sqrt{r-1}}$.
\end{proposition}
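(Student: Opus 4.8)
The plan is to handle the two claims in turn. The unit‑increment property is immediate from~\eqref{eqn:trans_kernel}: from any $x \in \R^r$ the chain moves to some $y \in S_x \cup S_x'$, and by definition every point of $S_x$ or $S_x'$ satisfies $\norm{x-y}_2 = 1$. The hypothesis $\sigma \ge \frac{1}{2\sqrt{r-1}}$ enters only to make $q$ a bona fide Markov kernel: Lemma~\ref{lem:sigma} provides precisely the comparison between $\chi_{r,\sigma^2}(s)$ and $\chi_{r,\sigma^2}(1-s)$ needed for the weight $\chi_{r,\sigma^2}(1-s)/\chi_{r,\sigma^2}(s)$ appearing in~\eqref{eqn:trans_kernel} to lie in $[0,1]$ for $s \in [\tfrac12,1)$, while the sets $S_x,S_x'$ are nonempty on the respective ranges of $s = \norm x_2$ used in~\eqref{eqn:trans_kernel}, as recorded there.

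For the stationary distribution, I would first reduce from $\R^r$ to the half‑line using rotational symmetry. The construction is rotation‑equivariant: for any orthogonal $O$ we have $S_{Ox} = O S_x$, $S_{Ox}' = O S_x'$, the uniform laws $\msf P_x,\msf P_x'$ push forward accordingly, and the mixture weights in~\eqref{eqn:trans_kernel} depend on $x$ only through $\norm x_2$; hence $q(Ox, O\,\cdot) = q(x,\cdot)$. Consequently, if $X$ has a rotationally invariant law then so does its one‑step image $X'$. Furthermore, since every point of $S_x$ has norm $\norm x_2$ and every point of $S_x'$ has norm $1-\norm x_2$, reading~\eqref{eqn:trans_kernel} shows that, given $\norm X_2 = s$, the norm $\norm{X'}_2$ equals $1-s$ with probability $p(s)$ and $s$ with probability $1-p(s)$, where $p(s) := 1$ on $[0,\tfrac12)$, $p(s) := \chi_{r,\sigma^2}(1-s)/\chi_{r,\sigma^2}(s)$ on $[\tfrac12,1)$, and $p(s) := 0$ on $[1,\infty)$; in particular the conditional law of $\norm{X'}_2$ depends on $X$ only through $\norm X_2$. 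Since a rotationally invariant probability law on $\R^r$ is determined by its radial marginal (disintegrate in polar coordinates; rotational invariance forces the conditional law of the direction to be uniform on $S^{r-1}$ at every radius), and $\NN(0,\sigma^2 I_r)$ is rotationally invariant with radial marginal $\chi_{r,\sigma^2}$, it suffices to show that the induced ``norm chain'' with acceptance function $p$ fixes the density $\chi_{r,\sigma^2}$.

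Finally I would verify this one‑dimensional statement by detailed balance. Writing $f := \chi_{r,\sigma^2}$, the norm chain is the identity on $[1,\infty)$, while on $(0,1)$ it proposes the involution $s \mapsto 1-s$ (which preserves Lebesgue measure and maps $(0,1)$ onto itself) and accepts it with probability $p(s)$; one recognizes $p(s) = \min\{1,\, f(1-s)/f(s)\}$ — indeed $f(1-s)/f(s) \ge 1$ for $s < \tfrac12$ and $\le 1$ for $s \in [\tfrac12,1)$ by Lemma~\ref{lem:sigma}, while for $s \ge 1$ we have $f(1-s) = 0$ so the move is always rejected — so the norm chain is exactly a Metropolis--Hastings chain with symmetric proposal. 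The detailed‑balance identity $f(s)\,p(s) = f(1-s)\,p(1-s)$ then holds for every $s \in (0,1)$ (both sides equal $f(s)$, directly from the definition of $p$), so $f$ is reversible and hence stationary for the norm chain. Unwinding the reduction, if $X \sim \NN(0,\sigma^2 I_r)$ then $X'$ is rotationally invariant with radial marginal $\chi_{r,\sigma^2}$, i.e.\ $X' \sim \NN(0,\sigma^2 I_r)$, which is the assertion.

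I expect the main point to get right to be the radial reduction itself: that $q$ preserves the class of rotationally invariant laws, that the induced dynamics on $\norm X_2$ is precisely the one‑dimensional chain described above, and that a rotationally invariant law on $\R^r$ is pinned down by its radial marginal. Once the problem has been reduced to the norm chain there is essentially nothing left to discover — the mixture weights in~\eqref{eqn:trans_kernel} were reverse‑engineered exactly so that detailed balance with respect to $\chi_{r,\sigma^2}$ holds.
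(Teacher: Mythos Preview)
Your argument is correct and takes a genuinely different route from the paper. The paper verifies stationarity by a direct computation in polar coordinates: it writes $\pi = \chi_{r,\sigma^2}\otimes\lambda$, splits the outer radial integral into the three regimes $[0,\tfrac12)$, $[\tfrac12,1)$, $[1,\infty)$, uses the rotational symmetry identity $\int \lambda(\D\theta)\, T_\# \msf P_{(s,\theta)} = \lambda$ (and likewise for $\msf P'$) to collapse the angular integrals, and then checks that the three pieces add up to $\int f\,\D\pi$. You instead first factor the problem through the radial marginal: rotation-equivariance of $q$ preserves the class of rotationally invariant laws, and the induced ``norm chain'' on $[0,\infty)$ is exactly a Metropolis--Hastings chain for the involutive proposal $s\mapsto 1-s$ with acceptance $\min\{1, f(1-s)/f(s)\}$, where $f=\chi_{r,\sigma^2}$; detailed balance then gives stationarity (indeed reversibility) of $f$ for the norm chain, and recombining with the uniform angular part recovers $\NN(0,\sigma^2 I_r)$. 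Your approach buys a conceptual explanation for the mixture weights in~\eqref{eqn:trans_kernel} --- they are precisely the Metropolis filter --- and yields reversibility of the radial chain rather than mere stationarity, at the cost of invoking the (standard) facts about equivariant kernels and MH chains; the paper's computation is entirely self-contained but somewhat opaque as to \emph{why} the cancellations occur. One small slip: in your detailed-balance check the two sides equal $\min\{f(s),f(1-s)\}$, not always $f(s)$ (for $s\in(\tfrac12,1)$ both sides equal $f(1-s)$); the identity is of course unaffected.
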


\begin{proof}
    We represent a point in $\R^r$ in spherical coordinates, i.e., as a pair $(s, \theta)$ with $s\ge 0$ and $\theta \in S^{r-1}$.
    Let $\pi \deq \chi_{r,\sigma^2} \otimes \lambda$, where $\lambda$ is the uniform distribution on $S^{r-1}$.
    Via a change of variables, it suffices to check that for every bounded measurable function $f : \R_+ \times S^{r-1} \to \R$, it holds that
    \begin{align*}
        \iint f(s', \theta') \, \pi(\D s, \D \theta) \, q\bigl((s,\theta), (\D s',\D \theta')\bigr)
        = \int f \, \D \pi\,.
    \end{align*}
    We rewrite the LHS of this equation as
    \begin{align*}
        &\iint f(s', \theta') \, \pi(\D s, \D \theta) \, q\bigl((s,\theta), (\D s',\D \theta')\bigr) \\
        &\qquad = \int_0^\infty \Bigl[\iint f(s', \theta') \, \lambda(\D \theta) \, q\bigl((s,\theta), (\D s',\D \theta')\bigr)\Bigr] \, \chi_{r,\sigma^2}(s) \, \D s \\
        &\qquad = \Bigl(\int_0^{1/2} + \int_{1/2}^1 + \int_1^\infty\Bigr) \cdots \chi_{r,\sigma^2}(s) \, \D s
        = {\rm I} + {\rm II} + {\rm III}\,,
    \end{align*}
    and examine each of the three terms separately.
    
    For the first term,
    \begin{align*}
        {\rm I}
        &= \int_0^{1/2} \Bigl[\iint f(1-s, \theta') \, \lambda(\D \theta) \, T_\# \msf P_{(s,\theta)}'(\D \theta')\Bigr] \, \chi_{r,\sigma^2}(s) \, \D s\,,
        \intertext{where $T : \R^r \to S^{r-1}$ is the map $x \mapsto x/\norm x_2$.
        Due to the rotational symmetry we have $\int \lambda(\D \theta) T_\# \msf P'_{(s, \theta)}(\cdot) = \lambda (\cdot)$, so that evaluating the integral over $\theta$ yields:}
        &= \int_0^{1/2} \Bigl[\int f(1-s, \theta') \, \lambda(\D \theta') \Bigr] \, \chi_{r,\sigma^2}(s) \, \D s
        = \int_{1/2}^1 \Bigl[ \int f(s, \theta') \, \lambda(\D \theta') \Bigr] \, \chi_{r,\sigma^2}(1-s) \, \D s\,.
    \end{align*}

    For the second term, a similar argument yields
    \begin{align*}
        {\rm II}
        &= \int_{1/2}^1 \Bigl[\frac{\chi_{r,\sigma^2}(1-s)}{\chi_{r,\sigma^2}(s)} \iint f(1-s, \theta') \, \lambda(\D \theta) \, T_\# \msf P_{(s,\theta)}'(\D \theta') \\
        &\qquad\qquad\qquad{} + \bigl(1 - \frac{\chi_{r,\sigma^2}(1-s)}{\chi_{r,\sigma^2}(s)}\bigr) \iint f(s, \theta') \, \lambda(\D \theta) \, T_\# \msf P_{(s,\theta)}(\D \theta')\Bigr] \, \chi_{r,\sigma^2}(s) \, \D s \\
        &= \int_{1/2}^1 \Bigl[\chi_{r,\sigma^2}(1-s) \int f(1-s, \theta') \, \lambda(\D \theta') + \bigl(\chi_{r,\sigma^2}(s) - \chi_{r,\sigma^2}(1-s)\bigr) \int f(s, \theta') \, \lambda(\D \theta')\Bigr] \, \D s \\
        &= \int_0^{1/2} \Bigl[ \int f(s, \theta') \, \lambda(\D \theta') \Bigr] \, \chi_{r,\sigma^2}(s) \, \D s \\
        &\qquad\qquad\qquad{} + \int_{1/2}^1 \Bigl[ \int f(s, \theta') \, \lambda(\D \theta')\Bigr] \, \bigl(\chi_{r,\sigma^2}(s) - \chi_{r,\sigma^2}(1-s)\bigr) \,\D s\,.
    \end{align*}
    
    Finally, the third term is
    \begin{align*}
        {\rm III}
        &= \int_1^\infty \Bigl[\iint f(s, \theta') \, \lambda(\D \theta) \, T_\# \msf P_{(s,\theta)}(\D \theta')\Bigr] \, \chi_{r,\sigma^2}(s) \, \D s
        = \int_1^\infty \Bigl[\int f(s, \theta') \, \lambda(\D \theta') \Bigr] \, \chi_{r,\sigma^2}(s) \, \D s\,.
    \end{align*}
    
    Putting it together, we obtain
    \begin{align*}
        {\rm I} + {\rm II} + {\rm III}
        &= \int_0^\infty \Bigl[\int f(s, \theta') \, \lambda(\D \theta') \Bigr] \, \chi_{r,\sigma^2}(s) \, \D s
        = \int f \, \D \pi
    \end{align*}
    which verifies the stationarity of $\pi$.
\end{proof}
For the details of how to sample from the Markov kernel \eqref{eqn:trans_kernel} see Appendix \ref{scn:sampling MC}. 

\subsection{Algorithm and analysis}\label{scn:alg_analysis}

We develop some notation needed for our algorithm. Let $\sigma_{\star,r} = 1/(2 \sqrt{r-1})$; recall that this is the smallest standard deviation parameter so that for $\sigma^2 \geq \sigma_{\star,r}^2$, the transition probabilities \eqref{eqn:trans_kernel} are well-defined. For $\sigma^2 \geq \sigma^2_{\star,r}$, we let $q_{\sigma^2}$ denote the corresponding Markov transition kernel with stationary distribution $\NN(0, \sigma^2 I_r)$ as shown in Proposition~\ref{prop:stationary}. 
Also, we let $\normal(0, \sigma^2 I_{m\times r})$ denote the law of an $m\times r$ matrix whose entries are i.i.d.\ $\normal(0, \sigma^2)$.

\begin{algorithm}[H]
\caption{Rank-$r$ Gaussian Fixed-Point Walk}\label{algo:rgfpw}
\textbf{Input} $v_1, \ldots, v_T \in \mathbb{R}^m$ of $\ell_2$-norm at most $1$ \\
\textbf{Initialize} $W_0 \sim \NN(0, \sigma^2_{\star,r} \, I_{m \times r})$ \\
\For{$t=1,\dotsc,T$ }{
Define $V_t^{(k)} \in \R^{m \times r}$ to be the matrix whose $k$-th column is $v_t$ and remaining entries are $0$. \\
$\sigma_t \leftarrow \sigma_{\star,r}/\norm{v_t}_2$ \\
$z_t \leftarrow \frac{1}{\norm{v_t}_2^2} W^\T_{t-1} v_t = \frac{1}{\norm{v_t}_2^2} \, {( \langle W_{t-1}, \, V_t\rp{k} \rangle)}_{k \in [r]}$ \\
$z_t' \leftarrow$ sample from $q_{\sigma_t^2}( z_t , \cdot)$ \\
$u_t \leftarrow z_t' - z_t$\\
$W_t \leftarrow W_{t-1} + v_t u_t^\T = W_{t-1} + \sum_{k=1}^r u_t[k] \, V_t^{(k)}$
}
\textbf{Output} $u_1, \ldots, u_T \in \mathbb{R}^r$ 
\end{algorithm}

\begin{proposition}
\label{prop:online_distribution}
If $\max\limits_{t \in [T]}{\norm{v_t}_2} \leq 1$, then for all $t \in [T]$, the partial sum $W_t \in \R^{m \times r}$ of Algorithm~\ref{algo:rgfpw} is distributed as $\NN(0, \sigma^2_{\star,r} \, I_{m \times r})$.
\end{proposition}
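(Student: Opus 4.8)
The plan is to prove this by induction on $t$, exactly mirroring the one-dimensional argument sketched in the high-level overview but now keeping track of an $m \times r$ matrix. The base case $t = 0$ holds by construction since $W_0 \sim \NN(0, \sigma_{\star,r}^2 I_{m\times r})$. For the inductive step, assume $W_{t-1} \sim \NN(0, \sigma_{\star,r}^2 I_{m\times r})$. The key observation is that the update $W_t = W_{t-1} + v_t u_t^\T$ only changes $W_{t-1}$ in the rank-one direction determined by $v_t$: writing the orthogonal decomposition $W_{t-1} = \proj_{v_t^\perp} W_{t-1} + \frac{v_t v_t^\T}{\norm{v_t}_2^2} W_{t-1}$ (projection acting on the left, i.e.\ on the column space), the first summand lives in $v_t^\perp \otimes \R^r$ and the second in $\spn(v_t) \otimes \R^r \cong \R^r$. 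Since $u_t$ is chosen as a function of $z_t = \frac{1}{\norm{v_t}_2^2} W_{t-1}^\T v_t$ together with fresh randomness inside the Markov kernel, and $z_t$ is a function only of $\proj_{v_t} W_{t-1}$, the new increment $v_t u_t^\T$ is independent of $\proj_{v_t^\perp} W_{t-1}$.

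First I would record the distributional facts about the two pieces. Because $W_{t-1}$ has i.i.d.\ $\NN(0,\sigma_{\star,r}^2)$ entries, its column space is rotationally invariant, so $\proj_{v_t^\perp} W_{t-1}$ and $\proj_{v_t} W_{t-1}$ are independent Gaussians supported on $v_t^\perp \otimes \R^r$ and $\spn(v_t)\otimes \R^r$ respectively, with the appropriate covariances. The relevant reduction is that $z_t = \frac{1}{\norm{v_t}_2^2} W_{t-1}^\T v_t \in \R^r$ is exactly the ``coordinate vector'' of $\proj_{v_t} W_{t-1}$ in the $\spn(v_t)$ direction, and a direct computation shows $z_t \sim \NN\bigl(0, \frac{\sigma_{\star,r}^2}{\norm{v_t}_2^2} I_r\bigr) = \NN(0, \sigma_t^2 I_r)$ with $\sigma_t = \sigma_{\star,r}/\norm{v_t}_2 \ge \sigma_{\star,r}$, so that (since $\sigma_t \ge \sigma_{\star,r}$) the kernel $q_{\sigma_t^2}$ is well-defined by Lemma~\ref{lem:sigma} and has stationary distribution $\NN(0,\sigma_t^2 I_r)$ by Proposition~\ref{prop:stationary}. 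Consequently $z_t' \sim \NN(0, \sigma_t^2 I_r)$ as well, and the update $W_t = W_{t-1} + v_t u_t^\T$ replaces the $\spn(v_t)$-component's coordinate vector $z_t$ by $z_t' = z_t + u_t$ while leaving $\proj_{v_t^\perp} W_{t-1}$ untouched.

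Putting the pieces together: conditionally on $\proj_{v_t^\perp} W_{t-1}$, the component of $W_t$ in the $\spn(v_t)$ direction has coordinate vector $z_t'$, which is an independent $\NN(0,\sigma_t^2 I_r)$ draw (here I use that applying one step of a Markov kernel to its stationary distribution returns the stationary distribution, together with the independence of the kernel's internal randomness from $\proj_{v_t^\perp}W_{t-1}$). Hence the $\spn(v_t)$-component of $W_t$ is distributed as $\NN\bigl(0, \sigma_{\star,r}^2 \, \frac{v_t v_t^\T}{\norm{v_t}_2^2}\bigr)$-part and is independent of the $v_t^\perp$-part, which is unchanged and hence still $\NN\bigl(0,\sigma_{\star,r}^2 (I_m - \frac{v_tv_t^\T}{\norm{v_t}_2^2})\bigr)$ on the column side. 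Adding the two independent Gaussian pieces, whose covariances sum to $\sigma_{\star,r}^2 I_m$ on the column side (and $I_r$ on the row side), gives $W_t \sim \NN(0, \sigma_{\star,r}^2 I_{m\times r})$, completing the induction.

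The main obstacle — and the place where I would be most careful — is the bookkeeping needed to justify the conditional-independence step cleanly: one must verify that $z_t$ (hence $u_t$, hence the entire increment $v_t u_t^\T$) depends on $W_{t-1}$ only through $\proj_{v_t} W_{t-1}$, that $\proj_{v_t} W_{t-1} \perp \proj_{v_t^\perp} W_{t-1}$ for a matrix with i.i.d.\ Gaussian entries, and that the extra randomness used to sample $z_t' \sim q_{\sigma_t^2}(z_t,\cdot)$ is drawn independently of everything else. Once these are pinned down, the claim reduces to the abstract fact that $P \# \mathcal{L}(z_t) = \mathcal{L}(z_t)$ whenever $\mathcal{L}(z_t)$ is $P$-stationary, which is Proposition~\ref{prop:stationary} applied with variance $\sigma_t^2$; the condition $\sigma_t = \sigma_{\star,r}/\norm{v_t}_2 \ge \sigma_{\star,r}$ that makes $q_{\sigma_t^2}$ well-defined is precisely why the algorithm initializes at the critical variance $\sigma_{\star,r}^2$ and why $\norm{v_t}_2 \le 1$ is assumed.
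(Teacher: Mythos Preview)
Your proposal is correct and follows essentially the same approach as the paper: an induction step that orthogonally decomposes $W_{t-1}$ along $v_t$ and $v_t^\perp$ (equivalently, along the subspace $S_t = \spn\{V_t^{(k)} : k\in[r]\}$ and its complement), identifies $z_t$ as the $\R^r$-valued coordinate of the $v_t$-component with law $\NN(0,\sigma_t^2 I_r)$, invokes stationarity of $q_{\sigma_t^2}$ from Proposition~\ref{prop:stationary} to get $z_t'\sim\NN(0,\sigma_t^2 I_r)$, and recombines the two independent pieces. The paper's write-up is terser but the logical skeleton, including the independence justification and the use of $\norm{v_t}_2\le 1$ to ensure $\sigma_t\ge\sigma_{\star,r}$, is identical.
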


\begin{proof}
We prove this by induction, so suppose that $W_{t -1} \sim \NN(0, \sigma^2_{\star,r} \, I_{m \times r})$.
Let $S_t$ be the subspace $S_t = \spn\{ V_t\rp{k} : k \in [r]\}$ and observe that
\[
W_{t -1} = \proj_{S_t^\perp} W_{t - 1} + \proj_{S_t} W_{t - 1} = \proj_{S_t^\perp} W_{t - 1} + \sum_{k = 1}^r z_t[k] \, V_t^{(k)}\,.
\]
Using standard facts about Gaussian random variables, these two components are independent and $z_t \sim \NN(0, \sigma_t^2 I_r)$ (recall the definition of $\sigma_t^2 \deq \sigma^2_{\star,r}/\norm{v_t}_2^2$). Also, the sampling $z_t' \sim q_{\sigma_t^2}( z_t , \cdot)$ is well-defined because $\sigma_t \geq \sigma_{\star,r}$ by the assumption that $\norm{v_t}_2 \leq 1$. Next,
\begin{align*}
   W_t &= W_{t - 1} + \sum_{k = 1}^r u_t[k] \, V_t\rp{k}
   = \underbrace{\proj_{S_t^\perp} W_{t - 1}}_{=: W_t\rp{1}} + \underbrace{\sum_{k=1}^r (z_t + u_t)[k] \, V_t\rp{k}}_{=: W_t\rp{2}}\,.
\end{align*}
Note that $z_t' = z_t + u_t \sim \NN(0, \sigma_t^2 I_r)$ and $\norm{u_t}_2 = 1$ by Proposition \ref{prop:stationary}. Hence, $W_t\rp{2}$ is distributed as a centered Gaussian on $S_t$ with variance $\sigma^2_{\star,r}$. Moreover, $W_t\rp{1}$ is independent of $W_t\rp{2}$: by the construction of our Markov chain, the distribution of $z_t' = z_t + u_t$ is a function only of $z_t$, and $z_t$ is independent of $W_t\rp{1}$ as noted above. Thus $W_t \sim \NN(0, \sigma^2_{\star,r} \, I_{m \times r})$.
\end{proof}
The following guarantee for Algorithm~\ref{algo:rgfpw} is an easy consequence. For a matrix $M \in \R^{m \times r}$, note that the $2\to\infty$ norm satisfies $\norm{M}_{2\to\infty} = \max_{1 \leq i \leq m} \norm{M_{i,:}}_2$.

\begin{theorem}\label{thm:analysis_of_alg}
    If $\max\limits_{t \in [T]}{\norm{v_t}_2} \leq 1$, then Algorithm~\ref{algo:rgfpw} outputs $u_1,\dotsc,u_T \in \R^r$ with the following properties.
    In expectation,
    \begin{align*}
        \E\max_{t\in [T]}{\Bigl\lVert \sum_{s=1}^t v_s u_s^\T \Bigr\rVert_{2\to\infty}}
        &\le \sqrt{\frac{2\log(mT)}{r-1}} + \sqrt{\frac{r}{r-1}}\,.
    \end{align*}
    Also, with probability at least $1-\delta$,
    \begin{align*}
        \max_{t\in [T]}{\Bigl\lVert \sum_{s=1}^t v_s u_s^\T \Bigr\rVert_{2\to\infty}}
        &\le \sqrt{\frac{2\log(2mT/\delta)}{r-1}} + \sqrt{\frac{r}{r-1}}\,.
    \end{align*}
    Finally, the time complexity of Algorithm~\ref{algo:rgfpw} is $O(mrT)$.
\end{theorem}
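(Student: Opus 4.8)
The plan is to read off the theorem from Proposition~\ref{prop:online_distribution}. Assume $r\ge 2$ (so that $\sigma_{\star,r}=1/(2\sqrt{r-1})$ is finite and the Markov chain is well defined) and, without loss of generality, that $v_t\neq 0$ for every $t$ — if $v_t=0$ the algorithm may output an arbitrary unit vector $u_t$, which is harmless since then $v_t u_t^\T=0$. Writing $W_t=W_0+\sum_{s=1}^t v_s u_s^\T$, Proposition~\ref{prop:online_distribution} states that $W_t\sim\NN(0,\sigma_{\star,r}^2 I_{m\times r})$ for every $t\in\{0\}\cup[T]$, so that $\sum_{s=1}^t v_s u_s^\T=W_t-W_0$ and the triangle inequality for $\norm{\cdot}_{2\to\infty}$ gives
\[
  \max_{t\in[T]}\Bigl\lVert\sum_{s=1}^t v_s u_s^\T\Bigr\rVert_{2\to\infty}
  =\max_{t\in[T]}\norm{W_t-W_0}_{2\to\infty}
  \le\norm{W_0}_{2\to\infty}+\max_{t\in[T]}\norm{W_t}_{2\to\infty}\,.
\]
Since $\norm{M}_{2\to\infty}=\max_{i\in[m]}\norm{M_{i,:}}_2$ and the rows $W_{t,i,:}$ of $W_t$ are i.i.d.\ $\NN(0,\sigma_{\star,r}^2 I_r)$, everything reduces to bounding the maxima over $i\in[m]$ (and over $t$) of $\norm{W_{t,i,:}}_2$, each of which equals $\sigma_{\star,r}$ times the Euclidean norm of a standard Gaussian vector in $\R^r$.

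The map $z\mapsto\norm z_2$ is $1$-Lipschitz, so Gaussian concentration \citep{boucheronlugosimassart2013concentration} furnishes two facts: $\E\norm{W_{t,i,:}}_2\le\sigma_{\star,r}\sqrt r$ (using $\E\norm Z_2\le\sqrt{\E\norm Z_2^2}=\sqrt r$ for $Z\sim\NN(0,I_r)$), and $(\norm{W_{t,i,:}}_2-\E\norm{W_{t,i,:}}_2)/\sigma_{\star,r}$ is $1$-sub-Gaussian, whence $\Pr\{\norm{W_{t,i,:}}_2\ge\sigma_{\star,r}(\sqrt r+u)\}\le e^{-u^2/2}$ for all $u\ge0$. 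For the high-probability bound I would union-bound over the $m(T+1)\le 2mT$ pairs $(t,i)$ (valid since $T\ge1$) with $u=\sqrt{2\log(2mT/\delta)}$, so that the failure probability is at most $m(T+1)e^{-u^2/2}\le\delta$; on the good event every $\norm{W_{t,i,:}}_2\le\sigma_{\star,r}(\sqrt r+\sqrt{2\log(2mT/\delta)})$, and the displayed quantity is at most $2\sigma_{\star,r}(\sqrt r+\sqrt{2\log(2mT/\delta)})$. For the in-expectation bound I would instead invoke the standard estimate $\E\max_{j\le N}Y_j\le\sqrt{2\log N}$ for mean-zero $1$-sub-Gaussian $Y_1,\dots,Y_N$ — which uses only the individual moment-generating-function bounds and so is insensitive to dependence among the $Y_j$ — applied once to the $mT$ centered, rescaled variables with $t\in[T]$ and once to the $m$ such variables with $t=0$; together with $\E\norm{W_{t,i,:}}_2\le\sigma_{\star,r}\sqrt r$ this gives $\E\max_{t\in[T]}\norm{W_t}_{2\to\infty}\le\sigma_{\star,r}(\sqrt r+\sqrt{2\log(mT)})$ and $\E\norm{W_0}_{2\to\infty}\le\sigma_{\star,r}(\sqrt r+\sqrt{2\log m})$, so, using $\log m\le\log(mT)$, the expectation of the displayed quantity is at most $2\sigma_{\star,r}(\sqrt r+\sqrt{2\log(mT)})$.

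It remains to substitute $\sigma_{\star,r}=1/(2\sqrt{r-1})$: the factor $2$ coming from the triangle inequality cancels the leading $\tfrac12$ of $\sigma_{\star,r}$, and $2\sigma_{\star,r}(\sqrt r+\sqrt{2\log(\,\cdot\,)})=\sqrt{r/(r-1)}+\sqrt{2\log(\,\cdot\,)/(r-1)}$, which is exactly the claimed bound with $\cdot=mT$ in expectation and $\cdot=2mT/\delta$ with probability $1-\delta$. For the running time, each round of Algorithm~\ref{algo:rgfpw} computes $\norm{v_t}_2$ and the vector $z_t=W_{t-1}^\T v_t/\norm{v_t}_2^2\in\R^r$ in $O(mr)$ time, samples $z_t'\sim q_{\sigma_t^2}(z_t,\cdot)$ in $O(r)$ time via the explicit sampler of Appendix~\ref{scn:sampling MC}, forms $u_t=z_t'-z_t$ in $O(r)$ time, and applies the rank-one update $W_t=W_{t-1}+v_t u_t^\T$ in $O(mr)$ time, for $O(mr)$ per round and $O(mrT)$ in total. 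There is no substantive obstacle once Proposition~\ref{prop:online_distribution} is in hand; the main things needing attention are that $W_0,\dots,W_T$ are strongly correlated — so one must use a plain union bound and the moment-generating-function bound for a maximum (both oblivious to dependence), not a martingale maximal inequality — and that the constants be tracked carefully so that the factor $2$ and the $\tfrac12$ in $\sigma_{\star,r}$ cancel exactly.
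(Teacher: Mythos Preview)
Your proposal is correct and follows essentially the same approach as the paper: reduce to $W_t-W_0$ via Proposition~\ref{prop:online_distribution}, apply the triangle inequality for $\norm{\cdot}_{2\to\infty}$, and then combine Gaussian concentration for $z\mapsto\norm z_2$ with the sub-Gaussian maximal inequality (in expectation) and a union bound (for the high-probability statement). Your write-up is in fact slightly more careful than the paper's---you handle the degenerate case $v_t=0$, spell out the high-probability union bound that the paper omits, and itemize the $O(mr)$ per-round cost---but the argument is the same.
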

\begin{proof}
    Note that $\sum_{s=1}^t v_s u_s^\T = W_t - W_0$, where $W_0, W_t \sim \normal(0, \sigma_{\star,r}^2 \, I_{m\times r})$ by Proposition~\ref{prop:online_distribution}, so that $\norm{\sum_{s=1}^t v_s u_s^\T}_{2\to\infty} \le \norm{W_0}_{2\to\infty} + \norm{W_t}_{2\to\infty}$.
    Since $\norm\cdot_{2\to\infty}$ is the maximum $\ell_2$ norm of a row, then $\max_{t\in [T]}{\norm{W_t}_{2\to\infty}}$ can be written as a maximum over $mT$ (not necessarily independent) random variables $Y_i$, where each $Y_i$ is a norm of a $\normal(0, \sigma_{\star,r}^2 \, I_r)$ random variable. On the other hand, Gaussian concentration of Lipschitz functions~\citep[see Theorem 5.5 of][]{boucheronlugosimassart2013concentration} implies that each $Y_i \in \mss{subG}(\sigma_{\star,r}^2)$, whereas $\E Y_i \le \sigma_{\star, r} \sqrt r$.
    A similar statement holds for $W_0$.
    
    For the expectation bound, a standard sub-Gaussian bound yields
    \begin{align*}
        \E\max_{t\in [T]}{\Bigl\lVert \sum_{s=1}^t v_s u_s^\T \Bigr\rVert_{2\to\infty}}
        &\le 2 \E\max_{i\in [mT]} Y_i
        \le 2 \E \max_{i\in [mT]}{(Y_i - \E Y_i)} + 2\sigma_{\star,r} \sqrt r \\
        &\le 2\sigma_{\star, r} \, \{\sqrt{2\log(mT)} + \sqrt r\}\,,
    \end{align*}
    and the result follows from $\sigma_{\star,r} = \frac{1}{2\sqrt{r-1}}$.
    The proof of the high-probability bound also uses standard facts about sub-Gausssian random variables and is omitted.
\end{proof}

As mentioned in Section \ref{sscn:results}, our online vector Koml\'{o}s result (Theorem \ref{thm:online_vector_komlos}) is an immediate consequence of Theorem \ref{thm:analysis_of_alg}. Interestingly, Theorem \ref{thm:analysis_of_alg} also implies that scaling $\sigma_{\star,r}^2 \sim \frac{1}{4r}$ of our variance parameter is asymptotically optimal in that no sequence of Markov chains, defined on $\R^r$ for $r \to\infty$ with unit length increments and Gaussian stationary distribution, can achieve a smaller asymptotic variance. Indeed, a smaller asymptotic variance would lead to an improvement of Theorem~\ref{thm:analysis_of_alg}, which would further yield an improvement of the bound $\discv(A) \le 1$ of~\citet{Niko13}. However, the example of the identity matrix shows that this is not possible. We do not know if our variance parameter $\sigma_{\star,r}^2 = \frac{1}{4 \, (r-1)}$ can be improved for any finite $r$.

We are now ready to prove our main result, the online Banaszczyk bound for Gaussian discrepancy. 

\medskip{}

\begin{proof}[Proof of Theorem \ref{thm:online_gauss_banaszczyk}]
Let $u_1, \ldots, u_T$ denote the output of Algorithm \ref{algo:rgfpw} with rank parameter $r \geq 2$, and let $\Sigma\rp{T}_{i,j} = \langle u_i, u_j \rangle_{\ell_2}$ denote the corresponding Gram matrix. As a consequence of Theorem \ref{thm:analysis_of_alg}, we have 
\begin{align}\label{eqn:vdiscr_bound} 
\max\limits_{t \in [T]}{\Bigl\lVert \sum\limits_{s=1}^t v_s u_s^\T\Bigr\rVert_{2 \to \infty}} \lesssim 1 \lor \sqrt{\frac{\log(mT/\delta)}{r}}
\end{align} 
with probability at least $1 - \delta$. If $r \geq \log(mT/\delta),$ then combining~\eqref{eqn:vdiscr_bound} with~\eqref{eqn:union_bound_useful} from the proof of Proposition~\ref{prop:discG_union_bound} implies that 
\[
    \discG(v_1, \ldots, v_T; \Sigma\rp{T}) \lesssim \sqrt{\log(2 m)} 
\]
with probability at least $1 - \delta$. If $r \leq \log(mT/\delta)$, then combining \eqref{eqn:vdiscr_bound} with~\eqref{eqn:low_rank_useful} from the proof of Proposition~\ref{prop:discG_low_rank} implies that
\[
\discG(v_1, \ldots, v_T; \Sigma\rp{T}) \lesssim \sqrt{r} \cdot \, \sqrt{\log(mT/\delta)/r} = \sqrt{\log(mT/\delta)}
\]
with probability at least $1 - \delta$. Combining the two cases concludes the proof.
\end{proof}

\appendix

\section{Equivalence of online discrepancy formulations}\label{scn:formulations equiv}

In Section \ref{sscn:online_disc} we introduced the online Gaussian discrepancy problem, and described two formulations: one in terms of correlation matrices, and one in terms of unit vectors in $\ell_2$. We now show that the two approaches are equivalent, so that the description of Problem \ref{prob:online_gaussian_discrepancy} is without loss of generality. 

It will be useful to define the \emph{Cholesky decomposition} of a PSD matrix $\Sigma \in \R^{n \times n}$. We define $\mathsf{Ch}(\Sigma)$ to be the unique matrix $L \in \R^{n \times n}$ that is lower triangular, has exactly $r = \msf{rank}(\Sigma)$ positive diagonal elements, has $n-r$ columns consisting entirely of zeros, and satisfies
\[
\Sigma = LL^\T\,.
\]
We refer the reader to \citet[Chapter 3]{Gen12} for details regarding existence and uniqueness.

\begin{proposition}
\label{prop:gaussian_online_equivalence}
The following hold. 
\begin{enumerate}
    \item Let ${(\Sigma^{(t)})}_{t \geq 1}$ be a stream of consistent (in the sense of \eqref{eqn:gaussian_consistency}) correlation matrices. Then one can output unit vectors ${(u_t)}_{t \geq 1}$ in $\ell_2$ online, such that $\Sigma^{(t)}$ is the Gram matrix of $u_1, \dots, u_t$ for each $t \geq 1$. Moreover, only the first $t$ coordinates of $u_t$ need be non-zero. 
    \item Let ${(u_t)}_{t \geq 1}$ be a stream of unit vectors in $\ell_2$. Then one can output consistent (in the sense of \eqref{eqn:gaussian_consistency}) correlation matrices ${(\Sigma^{(t)})}_{t \geq 1}$ online, such that $\Sigma^{(t)}$ is the Gram matrix of $u_1, \dots, u_t$ for each $t \geq 1$. 
\end{enumerate}
\end{proposition}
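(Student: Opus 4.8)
The plan is to prove both directions of Proposition~\ref{prop:gaussian_online_equivalence} using the Cholesky decomposition, exploiting the fact that consistency of the correlation matrices translates directly into the nesting of the associated lower-triangular Cholesky factors.

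\textbf{Part 1 (correlation matrices to unit vectors).} Given a consistent stream ${(\Sigma\rp t)}_{t \ge 1}$, I would set $L\rp t := \mathsf{Ch}(\Sigma\rp t) \in \R^{t \times t}$ and define $u_t \in \ell_2$ to be the $t$-th row of $L\rp t$, padded with zeros in coordinates $> t$ (so only the first $t$ coordinates are possibly non-zero, as claimed). The Gram matrix identity $\Sigma\rp t = L\rp t {(L\rp t)}^\T$ immediately gives $\Sigma\rp t_{i,j} = \langle (\text{row } i \text{ of } L\rp t), (\text{row } j \text{ of } L\rp t)\rangle$. The key point to verify is that this definition is \emph{online-consistent}: the vector $u_t$ should not depend on $\Sigma\rp{s}$ for $s > t$, i.e.\ row $i$ of $L\rp t$ should agree with row $i$ of $L\rp{t'}$ (on the first $t$ coordinates) for all $t' \ge t$. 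This follows from the consistency hypothesis $\Sigma\rp{t'}_{[t]\times[t]} = \Sigma\rp t$ together with the uniqueness of the Cholesky decomposition: the top-left $t \times t$ block of $L\rp{t'}$ is lower triangular, has the right number of positive diagonal entries and zero columns, and its product with its transpose is the top-left block of $\Sigma\rp{t'}$, which equals $\Sigma\rp t$; by uniqueness it must coincide with $L\rp t$. Hence $u_t$ is well-defined from $\Sigma\rp 1, \dotsc, \Sigma\rp t$ alone, and $\norm{u_t}_2^2 = \Sigma\rp t_{t,t} = 1$ since $\Sigma\rp t$ is a correlation matrix.

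\textbf{Part 2 (unit vectors to correlation matrices).} This direction is the easy one: given the stream of unit vectors ${(u_t)}_{t\ge 1}$ in $\ell_2$, simply define $\Sigma\rp t_{i,j} := \langle u_i, u_j\rangle_{\ell_2}$ for $i,j \in [t]$. This is manifestly PSD (a Gram matrix) with unit diagonal (since each $u_i$ is a unit vector), hence a correlation matrix; the consistency condition~\eqref{eqn:gaussian_consistency} holds by construction because the inner products of the first $t-1$ vectors are unchanged when a new vector arrives; and the map is online since $\Sigma\rp t$ only uses $u_1,\dotsc,u_t$. One small bookkeeping remark: since $\ell_2$ is infinite-dimensional, one should note that each $\Sigma\rp t$ is genuinely a finite $t\times t$ matrix, so no convergence issues arise.

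\textbf{Main obstacle.} The only genuinely non-routine step is the uniqueness argument for Part 1 --- specifically, checking that restricting $\mathsf{Ch}(\Sigma\rp{t'})$ to its top-left $t\times t$ block really does yield a matrix satisfying all four defining properties of $\mathsf{Ch}(\Sigma\rp t)$ (lower triangular, exactly $\mathsf{rank}(\Sigma\rp t)$ positive diagonal entries, $t - \mathsf{rank}(\Sigma\rp t)$ zero columns, and $L L^\T = \Sigma\rp t$). The rank condition requires a little care: one must argue that the rank of the top-left block of $L\rp{t'}$ matches $\mathsf{rank}(\Sigma\rp t)$ and that the positive-diagonal/zero-column structure is inherited correctly, which follows from the standard structure theory of the Cholesky decomposition of a PSD (possibly singular) matrix as developed in~\citet[Chapter 3]{Gen12}. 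Everything else is direct verification.
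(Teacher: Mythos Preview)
Your proposal is correct and follows essentially the same route as the paper: define $u_t$ as the last row of $\mathsf{Ch}(\Sigma\rp{t})$ padded with zeros, and verify well-definedness by showing that the top-left $t\times t$ block of $\mathsf{Ch}(\Sigma\rp{t'})$ coincides with $\mathsf{Ch}(\Sigma\rp{t})$ via uniqueness of the Cholesky factor. The paper carries out the uniqueness check by the same verification of the four defining properties (lower-triangular, correct rank, positive-diagonal/zero-column structure, and the factorization identity), phrased as a one-step induction from $t-1$ to $t$; your identification of the rank/zero-column bookkeeping as the only non-routine point matches exactly what the paper spells out.
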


\begin{proof}
Note that the second assertion is immediate after defining $\Sigma^{(t)}_{i,j} = \langle u_i, u_j \rangle_{\ell_2}$ for all $i,j \in [t]$ and $t \geq 1$. Now, we turn to the proof of the first assertion. For convenience, we will write $\iota_n$ for the inclusion map from $\R^k$ to $\R^n$ where $k \leq n$, with action $\iota_n(x_1, \dots, x_k) = (x_1, \dots, x_k, 0, \dots, 0)$. Let $\tilde u_t = {\msf{Ch}(\Sigma^{(t)})}_{t,:} \in \R^t$ be the last row of $\msf{Ch}(\Sigma^{(t)})$ for each $t \geq 1$, and let $u_t = \iota_\infty (\tilde u_t)$. We claim that the sequence ${(u_t)}_{t \geq 1}$ satisfies the requirements of the assertion. Clearly only the first $t$ coordinates of $u_t$ can be nonzero. Thus, it remains to show that $\langle u_i, u_j \rangle_{\ell_2} = \Sigma^{(t)}_{i,j}$ for all $i,j \in [t],\,t \geq 1$. 

Define $L^{(t)} = \msf{Ch}(\Sigma\rp{t})$ and $r_t = \msf{rank}(\Sigma\rp{t})$. It is sufficient to show that for all $t \geq 1$, the rows of $L\rp{t}$ are $\iota_t (\ti u_1), \ldots, \iota_t (\ti u_t)$. This is true for $t = 1$ by construction. We proceed by induction and suppose that the claim is true for round $t - 1$. Let $L'$ denote the principal minor of $L\rp{t}$ corresponding to the first $t-1$ indices. It suffices to show that $L' = L\rp{t-1}$ by the inductive assumption.

Both $L\rp{t-1}$ and $L'$ are lower triangular by construction. Next, by consistency and the definition of the Cholesky decomposition, we have
\[
L\rp{t-1}\, (L\rp{t-1})^\T = \Sigma\rp{t-1}=\Sigma\rp{t}_{[t-1] \times [t-1]} = L' \,{(L')}^\T\,. 
\] 
Therefore, $\msf{rank}(L\rp{t-1}) = \msf{rank}(L') = r_{t-1}$. Since columns of $L\rp{t}$ with non-positive diagonal elements contain all zeros, it follows that $L'$ has $r_{t-1}$ positive elements on its diagonal. Consequently, $L'$ also has $t-1 - r_{t-1}$ columns consisting of all zeros. We conclude that $L' = \msf{Ch}(\Sigma\rp{t-1}) = L\rp{t-1}$, as desired. 
\end{proof}

\section{Failure of some rounding schemes}\label{scn:failure_of_rounding}

As a preliminary, we refer to a univariate function $f : \R\to\R$ as a \textit{sign function} if 
\begin{align*}
	f(x) = \begin{cases}
		1 \quad  &\text{if } x > 0 \\
		-1 \quad &\text{if } x < 0 
	\end{cases}
\end{align*} 
and $f(0) \in \{-1, 0, 1 \}$. Given a vector $\xi \in \mathbb{R}^n$, we let $\sgn(\xi)$ denote a function given by the entrywise application of sign functions to the coordinates of $\xi$. 

Let $\Sigma \in \mc E_n$ be the correlation matrix achieving the optimal Gaussian discrepancy. In this section, we consider two ways of rounding $\Sigma$ to a signing. The Goemans--Williamson rounding \citep{GoeWil95} is defined by drawing a random vector $\xi \sim \NN(0, \Sigma)$ and setting
\[
\sigma_{\mathsf{GW}}(\Sigma) \deq \sgn(\xi) \,. 
\]
The PCA rounding is defined by taking a top eigenvector $v_1(\Sigma)$ of $\Sigma$ and setting
\begin{align*}
    \sigma_{\msf{PCA}}(\Sigma) \deq \sgn v_1(\Sigma)\,.
\end{align*}
Note that when $\Sigma = \sigma\sigma^\T$ has rank $1$, then $\sigma \in {\{\pm1\}}^n$ and both of these rounding schemes recover $\sigma$ (up to a global sign flip). 

In this section, we show that in the setting of both Spencer's theorem and the Koml\'os conjecture, these rounding schemes are not very effective. Concretely, for both of these settings, we construct a random matrix that has a planted Gaussian discrepancy solution with objective value zero. However, rounding this solution with either Goemans--Williamson or PCA rounding results in a value of the combinatorial discrepancy that is much larger than the guarantees obtained by direct application of Spencer's theorem or Banaszczyk's theorem.


First let us construct our planted Gaussian discrepancy solution. For $n \equiv 2 \pmod 4$  let $U$ denote the $n \times 2$ matrix whose $j$-th row is $u_j =  (\cos(\frac{2\pi j}{n}), \sin(\frac{2 \pi j}{n}))$, and let $\Sigma = UU^\T$. Then,
\begin{align}
\label{eqn:bad_gaussian_solution}
    \Sigma = \bs c \bs c^\T + \bs s \bs s^\T\,,
\end{align}
where $\bs c_j \deq \cos(\frac{2\pi j}{n})$ and $\bs s_j \deq \sin(\frac{2\pi j}{n})$. Given a vector $v \in \R^n$, define the shift operator $s: \R^n \to \R^n$ by $s(v) = (v_2,\dotsc,v_n, v_1)$ and let $w = (\mb 1_{n/2}, -{\mb 1_{n/2}})$. The lemma below verifies that the Goemans--Williamson and PCA roundings of $\Sigma$ are close to a vector in $\{w, s(w), \ldots, s^{n-1}(w) \}$. 

\begin{lemma}
\label{lem:Sigma_rounding}
	For $\Sigma$ defined as in $\eqref{eqn:bad_gaussian_solution}$ and $\sigma \in \{ \sigma_{\mathsf{GW}}(\Sigma), \sigma_{\mathsf{PCA}}(\Sigma) \}$, the following event holds with probability $1$: there exists $k \in \{0, \ldots, n - 1\}$, $i \neq j \in [n]$, and $a, b \in \{-2,-1, 0, 1,2\}$  such that 
	\[
	s^k(w) - \sigma = a e_i + b e_j.
	\]
\end{lemma} 

\begin{proof}
	For the Goemans--Williamson rounding, in fact it holds that  
	\begin{equation*}
		\Pr\big[\sigma_{\mathsf{GW}}(\Sigma) \in \{ w, s(w), \ldots, s^{n-1}(w) \}\big] = 1. 
	\end{equation*}
	To see this, observe that $\xi = U g$ where $g\in\R^2$ is distributed as standard normal. Thus $\sigma_{\mathsf{GW}}(\Sigma)_j = 1$ iff $\langle g,u_j\rangle > 0$. Since the points $u_j$ are equidistributed on the circle $S^1$, the claim follows by rotational invariance of the standard normal. 
	
	Let us now turn to the PCA-rounding scheme. Since $\bs c \perp \bs s$, the top eigenspace (corresponding to eigenvalue $n/2$) of $\Sigma$ is $\spn\{\bs c, \bs s\}$. Without loss of generality suppose that our rounding scheme chooses 
	\begin{equation*}
		v_1(\Sigma) = \frac{a}{\sqrt{a^2+b^2}} \,\bs c +\frac{b}{\sqrt{a^2 + b^2}} \,\bs s
	\end{equation*}
	as the top eigenvector of $\Sigma$ for some $(a,b) \neq (0,0)$. Let $\theta$ be such that
	\begin{align*}
		(\cos\theta,\sin\theta)=\frac{1}{\sqrt{a^2+b^2}} \,(a,b)\,.
	\end{align*}
	Then we have
	\begin{equation*}
		v_1(\Sigma) = \Bigl(\cos\bigl(\theta-\frac{2\pi j}{n}\bigr)\Bigr)_{j\in [n]}\,. 
	\end{equation*}

	Let $\sigma =  \sigma_{\mathsf{PCA}}(\Sigma) = \sgn(v_1(\Sigma))$. We consider two cases. First suppose that $v_1(\Sigma)_j= 0$ for some $j \in [n]$. Then it holds that $\theta \in \frac{\pi}{2} + \frac{2 \pi }{n}  \mathbb{Z}$. Let $j^* \in [n]$ denote the smallest positive integer such that $\theta + \frac{2\pi j^*}{n} \in \frac{\pi}{2} + \pi \mathbb{Z}$. Recalling that $n$ is even, we have 
	\begin{align*}
		\sigma_j = \begin{cases}
			\pm 1 &\quad \text{if}~j \in (j^*, j^* + n/2) \\
			\mp 1 &\quad \text{if}~j \in [n] \backslash [j^*, j^* + n/2],
			\end{cases}
	\end{align*}
	and $\sigma_j \in \{-1, 0, 1\}$ if $j \in \{ j^*, j^* + n/2  \}$. Here we use the properties of the cosine and sign functions.  By inspection, the conclusion of the lemma holds in this case. 
	
	If on the other hand, $v_1(\Sigma)_j \neq 0$ for all $j \in [n]$, then in fact we have
	\[
	\sigma \in \{w, s(w), \ldots, s^{n-1}(w) \},
	\]
	by the fact that $n$ is even and the properties of the cosine and sign functions. The lemma follows.
\end{proof}

Next, we construct a family of distributions $\mf p_{m,n}$ on $m \times n$ matrices such that the feasible Gaussian coupling $g \sim \NN(0, \Sigma)$ attains zero Gaussian discrepancy. Let $\Sigma^\perp = I_n - \frac{1}{\norm{\bs c}_2^2} \, \bs c \bs c^\T - \frac{1}{\norm{\bs s}_2^2} \, \bs s \bs s^\T$ and write $A \sim \mf p_{m,n}$ for the random $m \times n$ matrix with rows  
\[
g_1, \ldots, g_m \simiid \NN(0,\Sigma^\perp)\,.
\]
The rows of $A$ are isotropic Gaussians on the orthogonal complement of $\spn\{\bs c, \bs s\}$. Thus in particular $AU = 0$ almost surely so that $\discv_2(A) = 0$ and $\discG(A) = 0$ almost surely.

We prove an intermediate lemma that lower bounds the combinatorial discrepancy incurred by $\sigma \in \{w, s(w), \ldots, s^{n-1}(w) \}$.

\begin{lemma}
\label{lem:rounding_main_bound}
If $A \sim \mf p_{m,n}$,
then there exists a constant $c >0$ such that for $n$ sufficiently large and for all $\tau \geq 2$,
\[
\Pr\big[ \min_{0 \leq k \leq n-1} \norm{ A s^k(w) }_\infty \leq \tau \big] \leq n \exp\Bigl[ - m \exp\bigl( -\frac{c\tau^2}{\, n} \bigr) \Bigr] \,.
\]
\end{lemma}

\begin{proof}
We record some elementary calculations that will be useful later:
\begin{align}
    \sum_{j = 1}^n \cos^2\bigl( \frac{2 \pi j}{n} \bigr) = \sum_{j = 1}^n \sin^2\bigl( \frac{2 \pi j}{n} \bigr) &= \frac{n}{2} \label{eqn:trig1} \\
    \lim_{n \to \infty} \frac{1}{n} \sum_{j = 1}^{n/2} \cos\bigl(\frac{2 \pi j}{n}\bigr) &= 0 \label{eqn:trig2} \\
    \lim_{n \to \infty} \frac{1}{n} \sum_{j = 1}^{n/2} \sin\bigl(\frac{2 \pi j}{n}\bigr) &= \frac{1}{\pi} \label{eqn:trig3}
\end{align}
The identity in the first line is an elementary fact from complex variables, and the limits above are a consequence of Riemann integration.

Observe that for all $1 \leq k \leq n - 1$,
\[
\Bigl\lVert \sum_{i =1}^n w_i u_i^\T  \Bigr\rVert_2 = \Bigl\lVert \sum_{i =1}^n {s^k(w)}_i u_i^\T  \Bigr\rVert_2\,
\]
by the unitary invariance of the $\ell_2$-norm. Therefore, for all $1 \le i \le m$ and $1 \leq k \leq n - 1$, 
\[\langle g_i, s^k(w) \rangle \sim \NN(0, w^\T \Sigma^\perp w)\,.
\]
Next, 
\begin{align}
    \frac{1}{\norm{\bs c}_2^2}\, \langle \bs c, w \rangle^2 &= \frac{1}{n/2} \, \Bigl\{ 2 \sum_{k = 1}^{n/2} \cos\bigl(\frac{2 \pi k}{n}\bigr) \Bigr\}^2 = o(n)\,, \\
    \frac{1}{\norm{\bs s}_2^2} \,\langle \bs s, w \rangle^2 &= \frac{1}{n/2} \,\Bigl\{ 2 \sum_{k = 1}^{n/2} \sin\bigl(\frac{2 \pi k}{n}\bigr) \Bigr\}^2 = \frac{8n}{\pi^2} + o(n)\,,
\end{align}
by \eqref{eqn:trig1}, \eqref{eqn:trig2}, and \eqref{eqn:trig3}. We have that 
\[
w^\T \Sigma^\perp w =  \underbrace{\bigl(1 -  \frac{8}{\pi^2}\bigr)}_{=: c_0 > 0} \, n + o(n)\,.
\]
Thus the inner product of a row $g_i$ of $A$ with an output of either Goemans--Williamson or PCA rounding of $\Sigma$ is a centered Gaussian with variance roughly $c_0 n$. 

Next we lower bound $\norm{A\sigma}_\infty$ for $\sigma \in \{w, s(w), \ldots, s^{n-1}(w)\}$ using the first moment method. 
Define the random variable
\[
S = \sum_{k = 0}^{n - 1} \one\{\norm{A s^k(w) }_\infty \leq \tau\}
\]
that counts the number of shifts of $w$ attaining discrepancy $\tau$. The Markov inequality implies that
\begin{align}
\label{eqn:markov}
  \Pr[ \exists k: \norm{A s^k(w)}_\infty \leq \tau ] = \Pr[ S \geq 1] \leq \E S.  
\end{align}
By our work above, all of the vectors $A s^k(w)$ have the same distribution and have independent coordinates. Using this fact and standard Gaussian estimates, for $\tau \ge 2$,
\begin{align}
   \E S &= n \, \Pr[ \norm{Aw}_\infty \leq \tau  ]
   = n \, {\Pr_{g \sim \NN(0, w^\T \Sigma^\perp w)}[\abs g \leq \tau ]}^m \nonumber \\
   &\le n  \, \Bigl\{1 - \exp\Bigl( -\frac{\tau^2}{2\,\langle w, \Sigma^\perp w\rangle}\Bigr) \Bigr\}^m \nonumber \\
   &\leq n \exp\Bigl[ - m \exp\Bigl( -\frac{\tau^2}{2\,(c_0 + o(1)) \, n} \Bigr) \Bigr]\,. \label{eqn:GW_expectation_bd}
\end{align}
The lemma statement follows from \eqref{eqn:markov} and the previous display.
\end{proof}

Lemma \ref{lem:rounding_main_bound} allows us to demonstrate the poor performance of the rounding schemes considered in this section in two well-studied settings. 

\begin{proposition}
\label{prop:rounding_failure}
Let $A \sim \mf p_{m, n}$ and define $A' = \frac{1}{c \sqrt{ \log n} } \,A$ for a sufficiently large absolute constant $c > 0$. Let $\Sigma$ be defined as in \eqref{eqn:bad_gaussian_solution}, and let $\sigma \in \{\sigma_{\mathsf{GW}}, \sigma_{\mathsf{PCA}}\}$. Then 
\[
\discG(A') = 0 = \E \norm{A'g}_\infty \quad \text{if} \, \,g \sim \normal(0, \Sigma)\,,
\] 
and the following hold with high probability as $n \to \infty$: 
\begin{itemize}
    \item \emph{(Spencer setting)} If $m = \frac{n}{\log n}$, then $\max_{i\in [m], \;j\in [n]}{\abs{A'_{i,j}}} \leq 1$, and
    \begin{equation*}
        \disc(A') \lesssim \sqrt{\frac{n}{\log n}} \ll \sqrt{n} \lesssim \norm{A' \sigma}_\infty\,. 
    \end{equation*} 
    \item \emph{(Koml\'{o}s setting)} If $m = 10\log n$, then $\max_{j\in [n]} {\norm{A'_{:,j}}_2} \leq 1$, and
    \begin{equation*}
        \disc(A') \lesssim \sqrt{\log\log n} \ll \sqrt{\frac{n}{\log n}} \lesssim \norm{A' \sigma}_\infty\,. 
    \end{equation*}
\end{itemize}  
\end{proposition}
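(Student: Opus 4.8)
The plan is to establish the three claims in Proposition~\ref{prop:rounding_failure} by invoking Lemma~\ref{lem:rounding_main_bound} together with elementary facts about the Gaussian entries of $A$. The statement $\E\norm{A'g}_\infty = 0$ for $g \sim \normal(0,\Sigma)$ is immediate and requires no rescaling argument: as observed just before Lemma~\ref{lem:rounding_main_bound}, the rows of $A$ (and hence of $A'$) lie in the orthogonal complement of $\spn\{\bs c, \bs s\}$, so $A'U = 0$ almost surely; since $g$ has covariance $\Sigma = UU^\T$, we may write $g = U\zeta$ for $\zeta \sim \normal(0, I_2)$, giving $A'g = 0$ a.s.

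Next I would control the magnitude of the entries of $A'$ in each of the two settings. The entries $A_{i,j}$ are centered Gaussians; since $\Sigma^\perp \preceq I_n$, each $A_{i,j}$ has variance $\Sigma^\perp_{j,j} \le 1$, and each column $A_{:,j}$ consists of $m$ i.i.d.\ such Gaussians so $\norm{A_{:,j}}_2^2$ concentrates around $m \Sigma^\perp_{j,j} \le m$. In the Spencer setting, $m = n/\log n$ entries of $A$ of unit-subvariance are all bounded by $O(\sqrt{\log(mn)}) = O(\sqrt{\log n})$ with high probability by a union bound over the $mn$ Gaussians; dividing by $c\sqrt{\log n}$ for a suitably large constant $c$ forces $\max_{i,j}\abs{A'_{i,j}} \le 1$. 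In the Koml\'os setting, $m = 10\log n$, so $\norm{A_{:,j}}_2^2 \le m + O(\sqrt{m\log n}) = O(\log n)$ for every $j$ with high probability (again by a union bound over the $n$ columns, using Gaussian/chi-square concentration), and dividing by $c\sqrt{\log n}$ gives $\norm{A'_{:,j}}_2 \le 1$ for all $j$.

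Finally, for the discrepancy lower bounds I would apply Lemma~\ref{lem:rounding_main_bound} to $A$. The rounded signing $\sigma \in \{\sgw(\Sigma), \sigma_{\msf{PCA}}(\Sigma)\}$ satisfies, for $A' = A/(c\sqrt{\log n})$, the equivalence $\norm{A'\sigma}_\infty \le \tau' \iff \norm{A\sigma}_\infty \le c\sqrt{\log n}\,\tau'$. In the Spencer setting I would take the target value $\tau = c\sqrt{\log n}\,\tau'$ with $\tau' \asymp \sqrt n$, so that $\tau \asymp \sqrt{n\log n}$; then $m\exp(-c\tau^2/n) = m\exp(-\Theta(\log n)) = (n/\log n)\, n^{-\Theta(1)}$, which tends to $\infty$ if the exponent of $n$ in the bound is kept below $1$ — here one tunes the constant $c$ in the definition of $A'$ (the bound only needs $c$ large, which makes $\tau^2/n$ large and hence, wait — rather one picks $\tau' = \eps \sqrt n$ with $\eps$ a small constant so that $c\tau^2/n$ stays below $\log n$), giving $n\exp[-m\exp(-c\tau^2/n)] \to 0$. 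In the Koml\'os setting $m = 10\log n$, and taking $\tau' \asymp \sqrt{n/\log n}$ gives $\tau \asymp \sqrt n$, so $m\exp(-c\tau^2/n) = 10\log n \cdot \exp(-\Theta(1))$, which is $\Omega(\log n)$; then $n\exp[-\Omega(\log n)] = n^{1 - \Omega(1)} \to 0$ provided the implied constant exceeds $1$, again arranged by the choice of constants. The main obstacle is bookkeeping the constants: one must simultaneously choose $c$ large enough (in the rescaling $A' = A/(c\sqrt{\log n})$) that the entry/column-norm bounds hold, yet the product $m\exp(-c\tau^2/n)$ must diverge fast enough that $n\exp[-m\exp(-c\tau^2/n)] \to 0$ — this forces $c\tau^2/n$ to be at most a constant (Koml\'os) or at most $(1-\Omega(1))\log n$ (Spencer), which pins down the correct scale $\tau' \asymp \sqrt n$ resp.\ $\tau' \asymp \sqrt{n/\log n}$ in the conclusion. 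Everything else is routine Gaussian concentration plus Lemma~\ref{lem:rounding_main_bound}.
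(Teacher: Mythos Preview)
Your proposal is correct and follows essentially the same route as the paper: the zero Gaussian discrepancy is by construction, the entry/column-norm bounds come from standard Gaussian and $\chi^2$-type concentration plus a union bound, and the discrepancy lower bounds come from Lemma~\ref{lem:rounding_main_bound}. The paper is terser on the last point --- it simply says the lower bounds ``follow directly'' from the lemma --- whereas you spell out the constant-chasing (choosing $\tau' = \varepsilon\sqrt n$ resp.\ $\tau' = \varepsilon\sqrt{n/\log n}$ with $\varepsilon$ small enough relative to the rescaling constant $c$ and the lemma's constant); your bookkeeping there is correct once the mid-sentence self-correction is cleaned up.
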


\begin{proof}
Recall that $\E \norm{A'g}_\infty = 0$ holds for $g \sim \normal(0, \Sigma)$ by construction.	
	
Next we verify the high probability bounds on the size of the entries of $A'$. The bound on the size of the largest entry of $A'$ in the Spencer setting follows from the union bound and tail bounds for the standard Gaussian distribution. Now, for the Koml\'{o}s setting, since $\Sigma^\perp \preceq I_n$, it follows that $\Var(A_{i,j}) \leq 1$ for all $i, j$. Let $\tau_j \leq 1$ denote the common variance of the entries in column $j$. We have that $\norm{A_{:,j}}_2 - \sqrt{\tau_j m}$ is sub-Gaussian with variance proxy $1$ \citep[see \textit{e.g.}][Chapter 3]{vershynin2018highdimprob}. Applying the sub-Gaussian tail bound and the union bound over $1 \leq j \leq n$ implies that $\max_{j\in [n]}{\norm{A_{:,j}}_2} \lesssim \sqrt{\log n}$, as desired.
	
Given the previous bounds on the entries of $A'$, the upper bounds on $\disc(A')$ follow immediately from Spencer's theorem and Banaszczyk's bound, respectively. To finish, we verify the lower bounds on $\norm{A' \sigma}_\infty$ when $\sigma \in \{\sigma_{\mathsf{GW}}, \sigma_{\mathsf{PCA}}\}$. By Lemma \ref{lem:Sigma_rounding}, it holds with probability $1$ that for $\sigma \in  \{\sigma_{\mathsf{GW}}, \sigma_{\mathsf{PCA}}\}$, 
\[
s^k(w)- \sigma = a e_i + b e_j
\]
for some $k \in \{0, \ldots, n-1\}, i \neq j \in [n], $ and $|a|, |b| \leq 2$. By the union bound over this event and the event that $\max_{i,j} |A'_{i,j}| \leq 1$, it holds with high probability that
\begin{align}
\label{eqn:sigma_approx_skw}
\| A' ( \sigma - s^k(w) ) \|_\infty \leq 4,
\end{align}
for some $k \in \{0, \ldots, n-1 \}$, by the triangle inequality. By the union bound again over the event \eqref{eqn:sigma_approx_skw} and the event of Lemma \ref{lem:rounding_main_bound}, it holds that
\begin{align*}
	\| A' \sigma \|_\infty \geq \min_{0 \leq k \leq n-1} \|  A' s^k(w)  \|_\infty  - 4
	\gtrsim \begin{cases}
	\sqrt{n} \quad &\text{if } m = \frac{n}{\log n} \\
	\sqrt{\frac{n}{\log n}} \quad &\text{if } m = 10 \log n \\
	\end{cases}. 
\end{align*}
Precisely, for $ m = \frac{n}{\log n}$, we set $\tau = c' \sqrt{n \log n}$ in Lemma \ref{lem:rounding_main_bound}, and for $m = 10 \log n$, we set $\tau = c'' \sqrt{n} $, for sufficiently small absolute constants $c', c'' > 0$. 

\end{proof}


Furthermore, when $m = \frac{n}{\log n}$, a signing chosen uniformly at random already achieves discrepancy $O(\sqrt{n})$ for the matrix $A'$ with high probability. This indicates that in the Spencer setting our rounding schemes $\sigma_{\mathsf{GW}}$ and $\sigma_{\mathsf{PCA}}$ fail to `beat the union bound'. We justify this below for completeness.

\begin{lemma}
With high probability over the random matrix $A \sim \mf p_{m,n}$, a uniformly random sign $\sigma \sim \operatorname{unif}(\{\pm 1\}^n)$ satisfies
\begin{equation*}
    \E[\norm{A'\sigma}_\infty \mid A'] \lesssim \sqrt n\,, 
\end{equation*}
where $A' = \frac{1}{c\sqrt{\log n} }A$ as before. 
\end{lemma}
\begin{proof}
By standard concentration results~\citep[see, \textit{e.g.},][Theorem 5.5]{boucheronlugosimassart2013concentration} and using the fact that $\Var(A_{i,j}) \leq 1$ for all $i,j$, it holds for fixed $i$ that
\begin{align*}
    \Pr[ \norm{A_{i,:}}_2 - \sqrt{n} \geq t ] \leq \exp( -t^2/2)\,.
\end{align*}
Thus the union bound implies that the event $E = \{ \max_{1 \leq i \leq m}{\norm{ A_{i,:} }_2} \leq 2\sqrt{n}\}$ holds with probability at least $1 - me^{-n/2}$. Conditionally on the matrix $A$, the random variable $\sum_{j = 1}^n A_{i, j}\sigma_j$ is sub-Gaussian with parameter $\norm{A_{i,:}}_2^2$, so the maximal inequality for sub-Gaussian random variables~\citep[see, \textit{e.g.},][Theorem 2.5]{boucheronlugosimassart2013concentration} yields
\begin{align*}
    \E[\norm{A\sigma}_\infty \mid A]
    &\le \bigl(\max_{1 \le i \le m}{\norm{A_{i,:}}_2} \bigr) \,\sqrt{2 \log(2m)}
    \le 2\sqrt{2n \log \frac{2n}{\log n}}
    = O(\sqrt{n\log n})
\end{align*}
with probability at least $1 - me^{-n/2}$, as desired.
\end{proof}

\section{Sampling from the Markov chain}\label{scn:sampling MC}

To implement the Gaussian fixed point walk in dimension $r > 1$, we need to be able to sample from the Markov kernel \eqref{eqn:trans_kernel}. Clearly, to this end it is sufficient to solve the following problem. 

\begin{problem}\label{prob:sampling}
Given a point $x \in \R^r$ and a target radius $s \geq 0$, output a uniformly random point from
\begin{equation}
    S \deq \{y \in \R^r\,:\,\norm{x-y}_2 = 1,\, \norm{y}_2=s\}\,, 
\end{equation}
provided $S \neq \varnothing$. 
\end{problem}
Let $\delta = y-x$. Equivalently, we need to sample from the set of $\delta \in \R^r$ with $\norm{\delta}_2 = 1$ and $\norm{x+\delta}_2 = s$. Expanding and rearranging the latter, we obtain
\begin{equation*}
    \bigl\langle\frac{x}{\norm{x}_2}, \delta\bigr\rangle = \frac{s^2 - \norm{x}_2^2 - 1}{2\,\norm{x}_2}\,. 
\end{equation*}
The set of such $\delta$ can be parametrized as $\delta = \lambda x/\norm{x}_2 + \sqrt{1-\lambda^2}\, w$ where $w$ is a unit vector orthogonal to $x$, and $\lambda = (s^2-\norm{x}^2_2-1)/(2\norm{x}_2)$. Uniformly sampling such a $w$ is easy: given a standard Gaussian $g \sim \normal(0, I_r)$, project it onto $x^\perp$ and normalize:
\begin{equation}\label{eqn:sampling projection step}
    w = \frac{(I_r - xx^\T/\norm{x}_2)\, g}{\norm{(I_r - xx^\T/\norm{x}_2)\, g}_2}. 
\end{equation}
Putting it all together, we obtain the following algorithm.

\begin{algorithm}[H]
\caption{Sampling from the Markov kernel \eqref{eqn:trans_kernel}} \label{algo:markov kernel}
Input: $x \in \R^r$, $s \geq 0$. \\
Set $\lambda = (s^2-\norm{x}^2_2-1)/(2\norm{x}_2)$. \\
\If{$\lambda \notin [0,1]$}
    {Output: $S$ is empty.}
\Else
    {Draw $w$ according to \eqref{eqn:sampling projection step}.\\
    Set $\delta = \lambda x/\norm{x}_2 + \sqrt{1-\lambda^2}\, w$.\\
    Output: $x+\delta$. }
\end{algorithm}

\acks{We thank Kwangjun Ahn and Chen Lu for helpful conversations. SC was supported by the Department of Defense (DoD) through the National Defense
Science \& Engineering Graduate Fellowship (NDSEG) Program. PR is supported by NSF grants IIS-1838071, DMS-2022448, and CCF-2106377.}

\bibliography{bibliography.bib}

\end{document}